\documentclass[10pt, a4paper]{article}

\setlength{\textwidth}{5.8in}
\setlength{\textheight}{8.0in} \setlength{\oddsidemargin}{0in}
\setlength{\evensidemargin}{0in} \setlength{\topmargin}{0.2in}
\setlength{\headheight}{0in} \setlength{\headsep}{0.2in}
\setlength{\footskip}{0.4in}

\usepackage{hyperref}
\usepackage{amsfonts,amssymb,mathtools}
\usepackage{amssymb}
\usepackage{amsmath}
\usepackage{graphicx}
\usepackage[english]{babel}
\usepackage[latin1]{inputenc}
\usepackage{verbatim}
\usepackage{enumerate}
\usepackage{tikz}
\usetikzlibrary{backgrounds,calc}
\usepackage{float}
\usepackage{color}
\usepackage{acronym}
\usepackage{algorithm}
\usepackage{algorithmic}
\newfloat{algorithm}{H}{loa}
\floatname{algorithm}{Algorithm}
\newcounter{alg}
\setcounter{alg}{1}
\usepackage{multirow}
\newtheorem{theorem}{Theorem}

\newtheorem{corollary}[theorem]{Corollary}
\newtheorem{lemma}[theorem]{Lemma}
\newtheorem{claim}[theorem]{Claim}

\newtheorem{definition}{Definition}

\newtheorem{observation}[theorem]{Observation}
                      {}
\newenvironment{prooff}[1]{\begin{trivlist}
\item[\hspace{\labelsep}{\bf\noindent Proof of #1: }]
}{{\hspace*{\fill}{\eod}}\end{trivlist}}
                      {}

\def\squarebox#1{\hbox to #1{\hfill\vbox to #1{\vfill}}}

\def\eod{\vrule height 6pt width 5pt depth 0pt}
\newenvironment{proof}{\noindent {\bf Proof:} \hspace{.677em}}{\hspace*{\fill}{\eod}}

\allowdisplaybreaks

\begin{document}

\title{Improving approximate pure Nash equilibria in congestion games}

\author{Alexander Skopalik\thanks{Mathematics of Operations Research, University of Twente, Netherlands, E-mail:a.skopalik@utwente.nl}\and
Vipin Ravindran Vijayalakshmi\thanks{Chair of Management Science, RWTH Aachen, Germany. E-mail:vipin.rv@oms.rwth-aachen.de. This work is supported by the German research council (DFG) Research Training Group 2236 UnRAVeL.}}

\date{}
\maketitle


\begin{abstract}
Congestion games constitute an important class of games to model resource allocation by different users. As computing an exact~\cite{Fabrikant:2004:CPN:1007352.1007445} or even an approximate~\cite{Skopalik:2008:IPN:1374376.1374428} pure Nash equilibrium 
is in general PLS-complete, Caragiannis et al.~\cite{Caragiannis2011} present a polynomial-time algorithm that computes a (2 + $\epsilon$)-approximate pure Nash equilibria for games with linear cost functions and further results for polynomial cost functions. We show that this factor can be improved to $(1.61+\epsilon)$ and further improved results for polynomial cost functions, by a seemingly simple modification to their algorithm by allowing for the cost functions used during the best response dynamics be different from the overall objective function. Interestingly, our modification to the algorithm also extends to efficiently computing improved approximate pure Nash equilibria in games with arbitrary non-decreasing resource cost functions. Additionally, our analysis exhibits an interesting method to optimally compute universal load dependent taxes and using linear programming duality prove tight bounds on PoA under universal taxation, e.g, 2.012 for linear congestion games and further results for polynomial cost functions. Although our approach yield weaker results than that in Bil\`{o} and Vinci~\cite{Bilo}, we remark that our cost functions are locally computable and in contrast to~\cite{Bilo} are independent of the actual instance of the game.

\end{abstract}

\section{Introduction}
   Congestion games constitute an important class of games that succinctly represents a game theoretic model for resource allocation among non-cooperative users. A canonical example for this is the road transportation network, where the time needed to commute is a function on the total amount of traffic in the network (see e.g.~\cite{TransportUS}). A congestion game is a cost minimization game defined by a set of resources $E$, a set of $n$ players with strategies $S_1,\dots ,S_n \subseteq 2^E$, and for each resource $e \in E$, a cost function $f_e : \mathbb{N} \mapsto \mathbb{R_+}$. Congestion games were first introduced by Rosenthal~\cite{Rosenthal1973} and using a potential function argument proved that it belongs to a class of games in which a pure Nash equilibrium always exists, i.e., the game always consists of a self-emerging solution in which no user is able to improve by unilaterally deviating. For a strategy profile $s \in S_1 \times \cdots \times S_n$, the cost of a player $u \in \mathcal{N}$ is defined as $c_u(s) := \sum_{e \in s_u}f_e(n_e(s))$, where $n_e(s)$ denotes the number of players on the resource $e$ in the state $s$. The potential of the game in the state $s$ is defined as $\phi(s) := \sum_{e \in E}\sum_{i=1}^{n_e(s)}f_e(i)$. 
    
   \subsubsection*{Convergence to pure Nash equilibria} Fabrikant et al.~\cite{Fabrikant:2004:CPN:1007352.1007445} show that computing a pure Nash equilibrium in both symmetric and asymmetric  congestion games\footnote{A congestion game is called symmetric if $S_i=S_j$ for all $i,j\in \mathcal{N}$. Otherwise it is called asymmetric.} is PLS-complete. They show that regardless of the order in which local search is performed, there are initial states from where it could take exponential number of the steps before the game converges to a pure Nash equilibrium. Also, they show PLS-completeness for network congestion games with asymmetric strategy spaces. As a positive result, Fabrikant et al.~\cite{Fabrikant:2004:CPN:1007352.1007445} present a polynomial time algorithm to compute a pure Nash equilibria in certain restricted strategy spaces e.g. symmetric network congestion games. Ackermann et al.~\cite{Ackermann:2008:ICS:1455248.1455249} show that network congestion games with linear cost functions are PLS-complete. However, if the set of strategies of each player consists of the bases of a matroid over the set of resources, then they show that the lengths of all best response sequences are polynomially bounded in the number of players and resources. This alludes for studying {\em approximate} pure Nash equilibria in congestion games.
    
    To our knowledge, the concept of $\alpha$-approximate equilibria\footnote{Here we refer to the multiplicative notion of approximation. There is also a additive variant which is often denoted by $\epsilon$-Nash.} was introduced by Roughgarden and Tardos~\cite{Tim_Eva} in the context of non-atomic selfish routing games. An $\alpha$-approximate pure Nash equilibrium is a state in which none of the users can unilaterally deviate to improve by a factor of at least $\alpha$. Orlin et al.~\cite{Orlin:2004:ALS:982792.982880} show that every local search problem in PLS admits a fully polynomial time $\epsilon$-approximation scheme. Although their approach can be applied to congestion games, this does not yield an approximate pure Nash equilibrium, but rather only an approximate local optimum of the potential function. In case of congestion games, Skopalik and V{\"o}cking~\cite{Skopalik:2008:IPN:1374376.1374428} show that in general for arbitrary cost functions, finding a $\alpha$-approximate pure Nash equilibrium is PLS-complete, for any $\alpha>1$. However, for polynomial cost function (with non-negative coefficients) of maximum degree $d$, Caragiannis et al.~\cite{Caragiannis2011} present an approximation algorithm. They present a polynomial-time algorithm that computes (2 + $\epsilon$)-approximate pure Nash equilibria for games with linear cost functions and an approximation guarantee of $d^{O(d)}$ for polynomial cost functions of maximum degree $d$. Interestingly, they use the convergence of subsets\footnote{This subset is carefully chosen such that convergences in polynomial time is guaranteed.} of players to a $(1+\epsilon)$-approximate Nash equilibrium (of that subset) as a subroutine to generate a state which is an approximation of the minimal  potential function value (of that subset), e.g., $2\cdot \textsc{opt}$ for linear congestion games. This approximation factor of the minimal potential then essentially turns into the approximation factor of the approximate equilibrium.
    Feldotto et al.~\cite{Feldotto} using a \textit{path-cycle decomposition} technique bound this approximation factor of the potential for arbitrary cost functions. 
    
    	\subsubsection*{Load Dependent Universal Taxes} The last 20 years saw a significant amount of literature investigating the bound on the \textit{price of anarchy} (\textsc{PoA})~\cite{Koutsoupias2009} for various non-cooperative games and several attempts to improve the inefficiency of these self-emerging solutions. One of the many approaches used to improve the \textsc{PoA} is the introduction of taxes~\cite{Caragiannis, Fotakis, Fotakis2007}. For a set of resources $E$, the load dependent tax function $t$, is the excess cost incurred by the user on a resource $e \in E$ with cost $f(x)$, e.g., $f'(x) = f(x) + t(x)$. Meyers and Schulz~\cite{Meyers:2012:CWM:2305836.2305842} study the complexity of computing an optimal solution in a congestion game and prove NP-hardness. Makarychev and Sviridenko~\cite{Makarychev} give the best known approximation algorithm using randomized rounding on a natural feasibility LP with approximation factor
$\mathcal{B}_{d+1}$ which is the $d+1^{\text{th}}$ Bell number, where $d$ is the maximum degree of the polynomial cost function. 
    Interestingly, the same was later achieved using load dependent taxes by Bil\`{o} and Vinci~\cite{Bilo}, where they apply the \textit{primal-dual} method~\cite{Bilo:2018:UTB:3214123.3214125} to upper bound the \textsc{PoA} under refundable taxation in congestion games. They determine a load specific taxation to show that the \textsc{PoA} is at most $[O({d}/{\log d})]^{d+1}$ with respect to $\epsilon$-approximate equilibrium under refundable taxation. However, we remark that the load dependent taxes computed in~\cite{Bilo} aren't universal, i.e., they are sensitive to the instance of the game.
    
    \subsubsection*{Our Contribution} In this paper we improve the approximation guarantee achieved in the computation of approximate pure Nash equilibrium with the algorithm in Caragiannis et al.~\cite{Caragiannis2011}, using a linear programming approach which generalizes the smoothness condition in Roughgarden~\cite{Roughgarden:2015:IRP:2841330.2806883}, to modify the cost functions that users experience in the algorithm. In Section~\ref{CG_stretch} we present an adaptation of the algorithm in Caragiannis et al.~\cite{Caragiannis2011}. Although we only make a seemingly simple modification to their algorithm, we would like to remark that the analysis is significantly involved and does not follow immediately from~\cite{Caragiannis2011}, since the sub-game induced by the algorithm with the modified costs is not a potential game anymore. Table~\ref{table:apxPNE} lists the results for resource cost function that are bounded degree polynomials of maximum degree $d$.
    
    
  \begin{table}[H]
	\centering
	\begin{tabular}{ |c|c|c| }	
		\hline
		\text{~d~} & \text{Previous Approx.}~\cite{Caragiannis2011, Feldotto}&$\textbf{Our Approx.}$ $\rho_d + \epsilon$\\
		\hline
		$1$ &  $2+\epsilon$ & $\text{1.61}+\epsilon$\\
		$2$ &  $6+\epsilon$ & $\text{3.35}+\epsilon$\\
		$3$ &  $20+\epsilon$ & $\text{8.60}+\epsilon$\\
		$4$ &  $111+\epsilon$ & $\text{27.46}+\epsilon$\\
		$5$ &  $571+\epsilon$ & $\text{98.14}+\epsilon$\\
		\hline
	\end{tabular}
	\caption{Approximate pure Nash equilibria of congestion games with polynomial cost functions of degree at most $d$.}
	\label{table:apxPNE}
    \end{table}
    Our main contribution in this paper is 
    presented as Theorem~\ref{main_thm}, where the factor $\rho_d$ is listed in Table~\ref{table:apxPNE}.
    
    \begin{theorem}\label{main_thm}
		For every  $\epsilon > 0$, the algorithm computes a $(\rho_d + \epsilon$)-approximate equilibrium for every congestion game with non-decreasing cost functions that are polynomials of maximum degree $d$
	  in a number of steps which is polynomial in the number of players, $\rho_d$ and $1/\epsilon$.
	\end{theorem}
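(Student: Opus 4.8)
The plan is to adapt the analysis of Caragiannis et al.~\cite{Caragiannis2011}, replacing their potential-based arguments with the linear-programming/smoothness machinery that controls the modified cost functions. The algorithm runs best-response dynamics on carefully chosen subsets of players, but using modified resource costs $\tilde f_e$ instead of the true costs $f_e$. Since the sub-game with costs $\tilde f_e$ is generally not a potential game, I cannot invoke Rosenthal's potential directly; instead the first step is to identify the right surrogate. I would fix a per-degree pair of constants $(\lambda_d,\mu_d)$ satisfying a smoothness-type inequality of the form
\begin{equation*}
\sum_{e\in E}\tilde f_e(n_e(s)+1)\,a_e \;\le\; \lambda_d\sum_{e\in E} f_e(m_e)\,m_e \;+\; \mu_d\sum_{e\in E} f_e(n_e(s))\,n_e(s)
\end{equation*}
relating any deviation target $s^*$ (with loads $m_e$) to the current state $s$, where $a_e$ counts deviating players using $e$ in $s^*$. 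The constant $\rho_d$ in the theorem will come out as the optimal value of the LP that chooses $\tilde f_e$ (within the allowed class of locally computable modifications) to minimize the resulting approximation ratio $\lambda_d/(1-\mu_d)$, exactly paralleling the duality-based computation of universal taxes advertised in the abstract.

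Next I would follow the skeleton of~\cite{Caragiannis2011}: partition players into ``large'' and ``small'' blocks by their cost contribution, argue that running $(1+\epsilon')$-approximate best responses (with respect to $\tilde f$) on an appropriate block terminates in a polynomial number of steps, and show that after processing all blocks the resulting global state is an approximate equilibrium for the \emph{true} costs. The key inequality is to bound, for a player $u$ who could improve by a large factor in the true game, the gain she would also have in the $\tilde f$-game, then contradict the near-stationarity guaranteed by the subroutine. Here the smoothness inequality above is applied with $s^*$ chosen from $u$'s improving moves; summing over the block and using the block's small total true cost (relative to the optimum) converts the pointwise improvement into a bound of the form (true cost of block) $\le \rho_d \cdot (\text{optimum}) + (\text{error})$, which in turn caps the approximation factor. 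Throughout I must track that the modified costs are non-decreasing and of the same polynomial degree, so that the same degree-$d$ LP governs every block; non-decreasing (not necessarily polynomial) costs are handled by the same LP applied pointwise, giving the arbitrary-cost extension mentioned in the abstract.

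**The hard part will be** the step where the sub-game fails to be a potential game: in~\cite{Caragiannis2011} the $2\cdot\textsc{opt}$ bound on Rosenthal's potential is what drives everything, and that argument is no longer available. I expect to replace it with a direct charging argument showing that the $\tilde f$-approximate-stationary state reached by the subroutine has \emph{true} cost at most $\rho_d$ times the optimum of the block, using the LP-optimal choice of $\tilde f$ together with the smoothness inequality — essentially a ``price of anarchy under universal taxes'' statement applied at the block level. Making this robust requires care: (i) the deviations used in the smoothness argument must be feasible improving moves in the \emph{restricted} block game, (ii) the $\epsilon'$ slack in the approximate best responses must be propagated so that the final guarantee degrades only to $\rho_d+\epsilon$ with $\epsilon'$ chosen polynomially in $\epsilon/\rho_d$, and (iii) one must verify that choosing $\tilde f$ does not blow up the number of distinct cost values, so the running-time analysis of~\cite{Caragiannis2011} (polynomial in $n$, $\rho_d$, $1/\epsilon$) still goes through. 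Once the block-level bound and the termination bound are in place, the theorem follows by the same aggregation over blocks as in~\cite{Caragiannis2011}, with $\rho_d$ substituted for their degree-dependent constant.
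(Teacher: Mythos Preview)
Your high-level skeleton (block decomposition \`a la Caragiannis et al., modified costs in the inner best-response loop, then aggregate) is right, but two of the load-bearing pieces are misidentified.

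First, the smoothness inequality you wrote is with respect to \emph{social cost} (the terms $f_e(m_e)m_e$ and $f_e(n_e)n_e$), and you explicitly say $\rho_d$ arises as the PoA-style ratio $\lambda_d/(1-\mu_d)$ ``exactly paralleling the universal taxes'' computation. That is the wrong LP: it produces the numbers $\Psi_d$ in Table~\ref{table:locality} (e.g.\ $2.012$ for linear costs), not $\rho_d$ (e.g.\ $1.61$). The quantity the Caragiannis et al.\ machinery needs at the block level is the \emph{stretch} --- the ratio of the subgame potential at an approximate equilibrium to its minimum --- so the correct smoothness condition is with respect to the potential $\phi_e(n)=\sum_{i\le n}f_e(i)$, not $n f_e(n)$. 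Concretely one needs, for every resource and all $n,m,z$,
\[
\lambda\sum_{i=z+1}^{m+z} f_e(i)\;-\; m\,f'_e(n+z+1)\;+\; n\,f'_e(n+z)\;\ge\;\sum_{i=z+1}^{n+z} f_e(i),
\]
where the shift by $z$ is what guarantees the inequality survives restriction to an arbitrary induced subgame (players outside the current block are frozen). Without this \emph{strong} subgame smoothness, the per-block stretch bound (your ``PoA at the block level'') simply does not follow; you only get smoothness for the full game.

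Second, your plan for convergence (``direct charging argument'' in lieu of a potential) is too vague to work. The actual obstacle is that in phase $i$ players of $B_i$ move with respect to $f$ while players of $B_{i+1}$ move with respect to $f'$, so neither Rosenthal potential is exact. The fix is not a charging argument but a sandwich: the LP above forces $f_e(i)\le f'_e(i)\le \lambda f_e(i)$ for all $i$, and from this one checks that the Rosenthal potential built from $f'$ decreases under \emph{both} kinds of moves (it is exact for $q$-moves w.r.t.\ $f'$, and a $p$-move w.r.t.\ $f$ with $p$ slightly larger than $\lambda$ is automatically a $q$-move w.r.t.\ $f'$). This is what gives polynomial termination and what lets you bound the subgame potential at the start of each phase; a social-cost charging argument does neither. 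Once you switch the objective to $\phi$, add the subgame/strong form of smoothness, and use the $f'$-potential as the approximate Lyapunov function, the rest of your outline goes through essentially as in~\cite{Caragiannis2011}.
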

	Our approach also yields a simple and distributed method to compute load dependent universal taxes that improves the inefficiency of equilibria in congestion games. We remark that the taxes we consider in Section~\ref{section_extension} are refundable and do not contribute to the overall cost of the game. Table~\ref{table:locality} lists our results for price of anarchy (PoA) under refundable taxation for resource cost functions that are bounded degree polynomials.
	Bil\`{o} and Vinci~\cite{Bilo} present an algorithm to compute load dependent taxes that improve the price of anarchy e.g., for linear congestion games from 2.5 to 2. Although our methods yield slightly weaker results, 
	our cost functions are locally computable and in contrast to \cite{Bilo} are independent of the actual instance of the game. Furthermore, using linear programming duality we derive a reduction to a selfish scheduling game on identical machines, which implies a matching lower bound on the approximation factor. We would like to remark that our results for PoA were achieved independently of that in Paccagnan et al.~\cite{chandan1} by a very similar technique.
	   \begin{table}[H]
	\centering
	\begin{tabular}{ |c|c|c|c| }	
		\hline
		\multirow{2}{*}{$\text{~d~}$} & $\text{\text{PoA without Taxes}}$ & \text{Optimal Taxes}& $\textbf{Universal Taxes } \Psi_d$\\
		 &\text{Aland et al.~\cite{Games2011}}& \text{Bil\`{o} and Vinci~\cite{Bilo}}& $\text{Local Search w.r.t}~\zeta_{\textsc{sc}}$\\
		\hline
		$1$ & $2.5$ & $2$ & $\text{2.012}$\\
		$2$ & $9.583$ & $5$ &$\text{5.10}$\\
		$3$ & $41.54$ & $15$ &$\text{15.56}$\\
		$4$ & $267.6$ & $52$& $\text{65.12}$\\
		$5$ & $1414$ & $203$& $\text{641.32}$\\
		\hline
	\end{tabular}
	\caption{PoA under taxation in congestion games with polynomial cost functions of degree at most $d$.}
	\label{table:locality}
  \end{table}
  

\section{Definitions and Preliminaries}\label{def_prelim}
 
    A strategic game denoted by the tuple $\left(\mathcal{N},\left(S_u\right)_{u\in \mathcal{N}},  \left(c_u\right)_{u\in \mathcal{N}}\right)$ consists of a finite set of players $\mathcal{N}$ and for each player $u \in \mathcal{N}$, a finite set of strategies $S_u$ and a cost function $c_u: S \rightarrow \mathbb{R_+}$ mapping a state $s  \in S := S_1 \times S_2 \times\cdots \times S_N$ to the cost of player $u\in \mathcal{N}$. A congestion game is a strategic game that succinctly represents a decentralized resource allocation problem involving selfish users.

    \label{CG_p}
     A congestion game denoted by $G=\left(\mathcal{N},E,\left(S_u\right)_{u\in N}, \left(f_e\right)_{e\in E}\right)$ consists of a set of N players, $\mathcal{N} = \{1,2,\ldots,N\}$, who compete over a set of   resources, $E = \{e_1, e_2,\ldots, e_m\}$. 
     Each player $u \in \mathcal{N}$ has a set of strategies denoted by $S_u \subseteq 2^E$. Each resource $e \in E$ has a non-negative and non-decreasing cost function $f_e:\mathbb{N} \mapsto \mathbb{R_+}$ associated with it. Let $n_e(s)$ denote the number of players on a resource $e\in E$ in the state $s$, then the cost contributed by a resource $e \in E$ to each player using it is denoted by $f_e(n_e(s))$. 
     Therefore, the cost of a player $u \in \mathcal{N}$ in a state $s=(s_1,\ldots,s_N)$ of the game is given by $c_u(s)=\sum_{e \in E:e \in s_u}f_e(n_e(s))$. For a state $s$, $c_u(s_u', s_{-u})$ denotes the cost of player $u$, when only $u$ deviates.
     
     A state $s \in S$ is a pure Nash equilibrium (PNE), if there exists no player who could deviate to another strategy and decrease their cost, i.e., $ \forall u \in \mathcal{N}$ and $\forall s_u^\prime \in S_u$,~$c_u(s) \le  c_u(s_u^\prime, s_{-u})$.
     A weaker notion of PNE is the $\alpha$-approximate pure Nash equilibrium for $\alpha \ge 1$, which is a state  $s$ in which no player has an improvement that decreases their cost by a factor of at least $\alpha$, i.e., $\forall u\in \mathcal{N}$ and $\forall s_u^\prime \in S_u,~ \alpha\cdot c_u(s'_u,s_{-u}) \ge   c_u(s)$. 
 For congestion games the exact potential function $\phi(s) =\sum_{e \in E}\phi_e(n_e(s)) = \sum_{e \in E}\sum_{i=1}^{ n_e(s)}f_e(i)$, guarantees the existence of a PNE by proving that every sequence of unilateral improving strategies converges to a PNE. 
	We denote social or global cost of a state $s$ as $c(s) = \sum_{u\in \mathcal{N}} c_u(s)$ and the state that minimizes social cost is called the optimal, i.e., $s^* = \arg \min_{s \in S}c(s)$. The inefficiency of equilibria is measured using the price of anarchy (\textsc{PoA})~\cite{Koutsoupias2009}, which is the worst case ratio between the social cost of an equilibrium and the social optimum.
	
	A local optimum is a state $s$ in which there is no player $u \in \mathcal{N}$ with an alternative strategy $s'_u$ such that, $c(s'_u,s_{-u})< c(s)$ and an $\alpha$-approximate local optimum is a state $s$ in which there is no player $u$ who has an $\alpha$-move with a strategy $s'_u$ such that $\alpha\cdot c(s'_u,s_{-u})<   c(s)$. Let us remark that there is an interesting connection between a local optimum and a PNE. A PNE is a local optimum of the potential function $\phi$ and similarly, a local optimum is a Nash equilibrium of a game in which we change the resource cost functions from $f(x)$ to the marginal contribution to social cost, e.g., to $f'(x) = x f(x) - (x-1) f(x-1)$. Analogous to the PoA, the {\em stretch} of a congestion game is the worst case ratio between the value of the potential function at an equilibrium and the potential minimizer~\cite{Caragiannis2011}.
	
	\subsection{Revisiting $(\lambda, \mu)$-smoothness}\label{CG}

    After a long series of papers in which various authors~(e.g. \cite{Christodoulou:2005:PAF:1060590.1060600, Awerbuch:2005:PRU:1060590.1060599, Games2011}) show upper bounds on the price of anarchy, Roughgarden exhibited that most of them essentially used the same technique, which is formalized as $(\lambda, \mu)$-smoothness~\cite{Roughgarden:2015:IRP:2841330.2806883}. A game is called $(\lambda, \mu)$-smooth, if for every pair of outcomes $s$, $s^*$, it holds that, \[\sum_{u \in \mathcal{N}}c_u(s_u^*, s_{-u}) \le \lambda \cdot c(s^*) + \mu \cdot c(s).\tag{1}\]
    The price of anarchy of a $(\lambda, \mu)$-smooth game with $\lambda > 0$ and $\mu < 1$ is then at most $\frac{\lambda}{1-\mu}$. Observe that the original smoothness definition (1) can be extended to allow for an arbitrary objective function $h(s)$ instead of the social cost function $c(s)=\sum_{u \in \mathcal{N}} c_u(s)$.
	
	\begin{definition}
		\label{def:lambdasmooth}
		A  game is $(\lambda, \mu)$-smooth {\em with respect to an objective function $h$}, if for every pair of outcome $s$, $s^*$,
	 		\[\lambda \cdot  h(s^*) \geq   \sum_{u \in \mathcal{N}} c_u(s_u^*,s_{-u}) - \sum_{u \in \mathcal{N}} c_u(s) + (1-\mu) h(s).\]
	\end{definition}
	From the definition above, we restate the central smoothness theorem~\cite{Roughgarden:2015:IRP:2841330.2806883}.  
	\begin{theorem}
	\label{theo:smoothness}
		Given a $(\lambda, \mu)$-smooth game $G$ with $\lambda > 0$, $\mu < 1$, and an objective function h, then for every equilibrium $s$ and the global optimum $s^*$, \[h(s) \le \frac{\lambda}{1 - \mu} h(s^*).\]
	\end{theorem}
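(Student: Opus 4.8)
The plan is to derive the bound directly from the two ingredients already on the table: the defining inequality of a $(\lambda,\mu)$-smooth game (Definition~\ref{def:lambdasmooth}) and the fact that at a pure Nash equilibrium no player gains from a unilateral deviation. Concretely, I would fix an equilibrium $s$ and the social optimum $s^*$, and use, for each player $u$, the deviation from $s_u$ to the optimum strategy $s_u^*$ as the ``hypothetical move'' that the smoothness condition is tested against.

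First I would record the equilibrium condition player by player: since $s$ is a PNE, $c_u(s) \le c_u(s_u^*, s_{-u})$ for every $u \in \mathcal{N}$. Summing over all players yields $\sum_{u \in \mathcal{N}} c_u(s) \le \sum_{u \in \mathcal{N}} c_u(s_u^*, s_{-u})$, equivalently $\sum_{u \in \mathcal{N}} c_u(s_u^*, s_{-u}) - \sum_{u \in \mathcal{N}} c_u(s) \ge 0$. Next I would invoke the smoothness inequality of Definition~\ref{def:lambdasmooth}, namely $\lambda \cdot h(s^*) \ge \sum_{u \in \mathcal{N}} c_u(s_u^*, s_{-u}) - \sum_{u \in \mathcal{N}} c_u(s) + (1-\mu)\, h(s)$, and drop the nonnegative first difference on its right-hand side to obtain $\lambda \cdot h(s^*) \ge (1-\mu)\, h(s)$. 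Finally, since $\mu < 1$ we may divide by $1-\mu > 0$ to conclude $h(s) \le \frac{\lambda}{1-\mu}\, h(s^*)$, as claimed; the hypothesis $\lambda > 0$ serves only to guarantee the bound is a genuine (nonnegative) upper bound.

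Honestly there is no substantive obstacle here — the statement is essentially a rearrangement of the definition, which is precisely why it is presented as a ``restatement'' of Roughgarden's theorem. The only points worth a moment of care are that the smoothness condition must be instantiated with the particular deviation profile in which each player $u$ switches to its strategy $s_u^*$ from the optimum (this is already how Definition~\ref{def:lambdasmooth} is phrased, so nothing extra is needed), and that the chain of inequalities relies on the \emph{exact} equilibrium condition; replacing it by an $\alpha$-approximate condition weakens the conclusion by a factor depending on $\alpha$, which is the regime relevant to the algorithmic analysis later in the paper.
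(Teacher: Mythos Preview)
Your proposal is correct and follows essentially the same approach as the paper's proof: sum the Nash inequality $c_u(s)\le c_u(s_u^*,s_{-u})$ over all players, plug the resulting nonpositivity of $\sum_u c_u(s)-\sum_u c_u(s_u^*,s_{-u})$ into the smoothness inequality of Definition~\ref{def:lambdasmooth}, and divide by $1-\mu>0$. Your write-up is in fact slightly more explicit about the final division step than the paper's own proof.
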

	The smoothness framework introduced by Roughgarden~\cite{Roughgarden:2015:IRP:2841330.2806883} also extends to equilibrium concepts such as mixed Nash\footnote{In a mixed Nash equilibrium $(\sigma_1,\ldots,\sigma_N)$ each player $u$ chooses a probability distribution $\sigma_u$ of his set of strategies and $\forall u \in \mathcal{N}$, and $\forall \sigma_i^\prime ,~c_u(s) \le  c_u(s_u^\prime, s_{-u})$, where $c_u$ denotes expected cost of player $u$.} and (coarse\footnote{	A probability distribution $\sigma$ over the set of states $S$ is said to be an coarse correlated equilibrium (CCE) is if $\forall s, s^* \in S, \forall u \in \mathcal{N}, ~ \mathbb{E}_{s \sim \sigma}[c_u(s)] \le  \mathbb{E}_{s \sim \sigma}[c_u(s_u^*, s_{-u})]$. 
	}) correlated  equilibria\footnote{ 
 A probability distribution $\sigma$ over the set of states $S$ is said to be an   correlated equilibrium if for every player $i\in \mathcal{N}$ and every two strategies $s_a,s_b \in S_i$ and every recommendation $s=(s_1,\ldots,s_n)\sim \sigma$, the expected cost for following the recommendation $s_u=s_a$ is not greater that choosing $s_b$ instead. 
  }. The same is true for our variant with respect to an arbitrary objective function $h$. For the sake of completeness a rework can be found in the Appendix~\ref{missing_proofs_apxalgo}. 
	
	From Definition~\ref{def:lambdasmooth} we note the following observation.
\begin{observation}\label{obser1}
Every $(\lambda, \mu)$-smooth game G with $\lambda>0$ and $\mu < 1$, is also $(\frac{\lambda}{1-\mu},0)$-smooth with its cost functions scaled by a factor $\frac{1}{1-\mu}$ .
\end{observation}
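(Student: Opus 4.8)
The plan is to unwind Definition~\ref{def:lambdasmooth} directly: the statement is a purely algebraic reformulation of the smoothness inequality, so no combinatorial work is needed. Fix the objective function $h$ and suppose $G$ is $(\lambda,\mu)$-smooth with respect to $h$, i.e. for every pair of outcomes $s,s^*$,
\[\lambda\, h(s^*) \;\ge\; \sum_{u\in\mathcal{N}} c_u(s_u^*,s_{-u}) \;-\; \sum_{u\in\mathcal{N}} c_u(s) \;+\; (1-\mu)\,h(s).\]
Let $G'$ be the game obtained from $G$ by replacing every resource cost function $f_e$ with $\tfrac{1}{1-\mu}f_e$, keeping the same strategy sets and the same objective $h$; since $\mu<1$ this rescaling is positive, and the cost of every player $u$ in $G'$ at any state $t$ (and under any unilateral deviation) is exactly $c_u'(t)=\tfrac{1}{1-\mu}c_u(t)$.

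First I would write down what must be verified for $G'$, namely the inequality of Definition~\ref{def:lambdasmooth} with parameters $\big(\tfrac{\lambda}{1-\mu},0\big)$ and objective $h$:
\[\tfrac{\lambda}{1-\mu}\, h(s^*) \;\ge\; \sum_{u\in\mathcal{N}} c_u'(s_u^*,s_{-u}) \;-\; \sum_{u\in\mathcal{N}} c_u'(s) \;+\; h(s)\qquad\text{for all }s,s^*.\]
Substituting $c_u'=\tfrac{1}{1-\mu}c_u$ and multiplying through by $1-\mu>0$ (which preserves the direction of the inequality) reduces this to exactly the assumed smoothness inequality for $G$ displayed above. Hence $G'$ is $\big(\tfrac{\lambda}{1-\mu},0\big)$-smooth with respect to $h$, which is the claim.

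I do not expect any genuine obstacle here; the two points that deserve a line of care are (i) that $1-\mu>0$, so dividing the cost functions by $1-\mu$ and later clearing that factor never flips an inequality, and (ii) that the objective $h$ is held fixed — only the cost functions the players experience are rescaled, not the function against which efficiency is measured. This decoupling is precisely the device the rest of the paper relies on, and it is the reason the naive attempt of rescaling all costs in the plain form $\sum_{u\in\mathcal{N}} c_u(s_u^*,s_{-u})\le\lambda\,c(s^*)+\mu\,c(s)$ — which would leave a spurious $\tfrac{\mu}{1-\mu}c(s)$ term on the right-hand side — does not by itself yield the statement.
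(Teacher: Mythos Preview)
Your argument is correct: substituting the rescaled costs $c_u'=\tfrac{1}{1-\mu}c_u$ into the $(\tfrac{\lambda}{1-\mu},0)$-smoothness inequality and clearing the positive factor $1-\mu$ recovers exactly the assumed $(\lambda,\mu)$-smoothness inequality with respect to $h$. The paper itself does not spell out a proof of this observation --- it is presented as an immediate consequence of Definition~\ref{def:lambdasmooth} --- so your direct algebraic verification is precisely the intended justification.
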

	
	Given a strategic game $G=\left(\mathcal{N},\left(S_u\right)_{u\in \mathcal{N}},  \left(c_u\right)_{u\in \mathcal{N}}\right)$, one can determine $\lambda$ and $\mu$ that satisfies the smoothness condition in Definition~\ref{def:lambdasmooth}, for all pairs of solution $s, s^*$. However, since the state space $S$ grows exponentially in the number of players, this would be computationally inefficient. Therefore, we typically have to work with games in which the players' costs and the objective function $h$ can be represented in a succinct way. In congestion games, the players cost and the global objective function are implicitly defined by the resource cost function.  In the following, we allow for an arbitrary, additive objective function $h(s)$, i.e., of the form $h(s) = \sum_{e \in E} h_e(n_e(s))$.	

	We study games in which we change the cost functions $c_u$ experienced by the players. As a consequence of Observation~\ref{obser1} and scaling the cost functions appropriately, we can always ensure that we satisfy the smoothness inequality with $\mu=0$, to conveniently restate the smoothness condition as follows.
	\begin{lemma}
		\label{def:usmooth}
		A congestion game is $(\lambda, 0)$-smooth with respect to an objective function $h(s) = \sum_{e \in E} h_e(n_e(s))$, if for every cost function $f'_e:\mathbb{N} \mapsto \mathbb{R}_+$ and for every $ 0\le n,m \le N,$ \[\lambda \cdot  h_e(m)  \geq   m  f'_e\left(n+1\right) -  n f'_e(n) + h_e(n).\]
	\end{lemma}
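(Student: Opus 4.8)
The plan is to derive the global smoothness inequality of Definition~\ref{def:lambdasmooth} (specialised to $\mu = 0$) by summing the hypothesised per-resource inequality over $E$. Fix an arbitrary pair of states $s, s^*$, and for each $e \in E$ abbreviate $n_e := n_e(s)$ and $m_e := n_e(s^*) = |\{u \in \mathcal{N} : e \in s^*_u\}|$; both lie in $\{0, 1, \dots, N\}$. Throughout, the players' costs are those of the congestion game with the modified resource functions, i.e.\ $c_u(t) = \sum_{e \in t_u} f'_e(n_e(t))$ for a profile $t$, with each $f'_e$ non-decreasing.

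The first step is to rewrite the three aggregate quantities appearing in the smoothness inequality as sums over resources. By additivity, $h(s) = \sum_{e \in E} h_e(n_e)$ and $h(s^*) = \sum_{e \in E} h_e(m_e)$, and by swapping the order of summation over players and resources, $\sum_{u \in \mathcal{N}} c_u(s) = \sum_{e \in E} n_e f'_e(n_e)$. For the deviation term, note that when player $u$ unilaterally switches to $s^*_u$ the load on any resource changes by at most one, so $n_e(s^*_u, s_{-u}) \le n_e + 1$; since $f'_e$ is non-decreasing this gives $f'_e(n_e(s^*_u, s_{-u})) \le f'_e(n_e + 1)$, and summing over $e \in s^*_u$ and then over $u$ yields $\sum_{u \in \mathcal{N}} c_u(s^*_u, s_{-u}) \le \sum_{e \in E} m_e f'_e(n_e + 1)$.

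Putting these together, the right-hand side of the $\mu = 0$ smoothness inequality is at most $\sum_{e \in E} \bigl(m_e f'_e(n_e + 1) - n_e f'_e(n_e) + h_e(n_e)\bigr)$. Now I would apply the hypothesis with $n = n_e$ and $m = m_e$ to each summand — permissible precisely because $n_e, m_e \in \{0, \dots, N\}$ — to bound the $e$-th term by $\lambda \, h_e(m_e)$; summing over $e \in E$ gives $\lambda \sum_{e \in E} h_e(m_e) = \lambda \, h(s^*)$, which is exactly $(\lambda, 0)$-smoothness of the game with cost functions $c_u$ with respect to $h$ in the sense of Definition~\ref{def:lambdasmooth}.

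I do not expect a real obstacle here: the argument is the standard resource-wise decomposition used to establish smoothness bounds in congestion games. The only points requiring care are the unilateral-deviation load bound $n_e(s^*_u, s_{-u}) \le n_e + 1$ combined with monotonicity of $f'_e$ (this is what lets the clean value $f'_e(n+1)$ appear in the hypothesis instead of the exact post-deviation load), and keeping the double summation over players and resources straight when passing to the resource-wise form.
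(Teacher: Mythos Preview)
Your proposal is correct and follows essentially the same approach as the paper: fix arbitrary $s,s^*$, instantiate the per-resource hypothesis with $n=n_e(s)$, $m=n_e(s^*)$, and sum over $e\in E$ to recover Definition~\ref{def:lambdasmooth} with $\mu=0$. You are in fact more careful than the paper in spelling out why $\sum_{u} c'_u(s^*_u,s_{-u}) \le \sum_e n_e(s^*) f'_e(n_e(s)+1)$ requires the load bound $n_e(s^*_u,s_{-u})\le n_e(s)+1$ together with monotonicity of $f'_e$, a step the paper's proof leaves implicit.
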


	The proof in the Appendix~\ref{missing_proofs_apxalgo} follows from summing the inequality of the lemma with $m=n_e(s^*)$ and $n=n_e(s)$ for two arbitrary solutions $s$ and $s^*$, for all $e \in E$. From now on, we use $f'=(f'_e)_{e\in E}$ whenever we refer to the modified cost functions and denote the players cost by $c_u'(s) = \sum_{e \in s_u} f'_e(n_e(s))$.

\subsubsection*{Strong
smoothness}
	In Section~\ref{CG_stretch} we present an algorithm to compute an approximate pure Nash equilibria with an improved approximation guarantee than that in Caragiannis et al.~\cite{Caragiannis2011} and the proof of which uses the potential function argument for a subset of players $F \subseteq \mathcal{N}$. In particular, it needs the property that the subgame induced by every subset of players from $\mathcal{N}$ is $(\lambda, 0)$-smooth. Unfortunately, Lemma~\ref{def:usmooth} does not guarantee this property. Therefore, we define a stronger notion of $(\lambda, 0)$-smoothness that guarantees that the smoothness condition also holds for an arbitrary subset of players and its induced subgame.
	
	Let us denote by $n_e^F(s)$ the number of players in $F$ that use the resource $e$ in the state $s$.
	\begin{definition}
		\label{def:usmooth_subset}
		A strategic game is  strongly $(\lambda, 0)$-smooth with respect to an objective function $h$ and for some $\lambda > 0$, if for every subset $F \subseteq \mathcal{N}$ and for every $s,s^* \in S$,
		\[ \lambda \cdot  h^F(s^*) \geq   \sum_{u \in F}  c'_u(s_u^*,s_{-u}) - \sum_{u \in F}  c'_u(s) + h^F(s), \] where $h^F(s) := \sum_{e \in E}h_e(n_e(s)) - h_e(n_e^{\mathcal{N}\setminus F}(s)).$
	\end{definition}
	 
	 Now consider an arbitrary subset of players $F \subseteq \mathcal{N}$ and a state $s$. Let us define the potential of this subset as the potential in the subgame induced by these players in $s$, i.e., $\phi^F(s) := \sum_{i=1}^{n_e^F(s)}f_e(i + n_e^{\mathcal{N}\setminus F}(s))$. With slight abuse of notation, 
   we remark that $\phi^F(s)$ and $\phi_F(s)$ are equivalent.  Then, $G_s^F:=(F,E,(S_u)_{u \in F}, (f^F_e)_{e_\in E})$ is the  subgame induced by freezing the remaining players from $\mathcal{N} \setminus F$, with $f^F_e(x):=f_e(x+n_e^{\mathcal{N}\setminus F}(s))$, where $n_e^{\mathcal{N}\setminus F}(s)$ is the number of players outside of $F$ on resource $e$ in the state $s$. 
  Then, the following lemma gives a stronger notion of the $(\lambda, 0)$-smoothness condition.

 	  
    \begin{lemma}
    \label{lem:subset}
    For every congestion game $G$ with cost functions $f'_e:\mathbb{N} \mapsto \mathbb{R}_+$, which is $(\lambda, 0)$-smooth with respect to the potential function $\phi_e$ for every subgame $G^F_s$ induced by an arbitrary subset $F \subseteq \mathcal{N}$, and arbitrary states $s,s^* \in S$, i.e., \[\lambda \cdot \phi_e^F(s^*) - n_e^F(s^*)\cdot f_e'(n_e(s)+1) + n_e^F(s) \cdot f_e'(n_e(s)) \ge \phi_e^F(s),\] 
    is also strongly $(\lambda, 0)$-smooth. 
    \end{lemma}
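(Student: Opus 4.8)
\begin{sketch}
The inequality to be established --- the defining inequality of strong $(\lambda,0)$-smoothness in Definition~\ref{def:usmooth_subset}, instantiated with the objective $h$ taken to be the Rosenthal potential $\phi$ (so that $h_e=\phi_e$ and $h^F(s)=\phi^F(s)$) --- is additive over the resources, so the plan is to sum the hypothesised per-resource inequality over all $e\in E$ and then match the three aggregate quantities one-to-one with the three terms of Definition~\ref{def:usmooth_subset}. Throughout, fix a subset $F\subseteq\mathcal{N}$ and two states $s,s^*\in S$; everything below refers to this triple. It is convenient to use the difference form $\phi_e^F(s)=\phi_e(n_e(s))-\phi_e\big(n_e^{\mathcal{N}\setminus F}(s)\big)$, which equals $\sum_{i=1}^{n_e^F(s)}f_e\big(i+n_e^{\mathcal{N}\setminus F}(s)\big)$ and hence coincides with the per-resource summand appearing in the definition of $h^F$.

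Two of the three aggregate terms match by an identity rather than an estimate. Summing the potential terms over $e$ gives $\sum_{e\in E}\phi_e^F(s)=\phi^F(s)=h^F(s)$ and $\sum_{e\in E}\phi_e^F(s^*)=\phi^F(s^*)=h^F(s^*)$, directly from $h^F(\cdot)=\sum_e\big(h_e(n_e(\cdot))-h_e(n_e^{\mathcal{N}\setminus F}(\cdot))\big)$ with $h_e=\phi_e$. For the ``cost at $s$'' term, exchanging the order of summation yields $\sum_{e\in E} n_e^F(s)\,f'_e(n_e(s))=\sum_{u\in F}\sum_{e\in s_u}f'_e(n_e(s))=\sum_{u\in F}c'_u(s)$, since a player $u\in F$ contributes $f'_e(n_e(s))$ to resource $e$ exactly when $e\in s_u$, and there are $n_e^F(s)$ such players on $e$.

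The only term that requires an argument is the deviation term: I will show $\sum_{e\in E} n_e^F(s^*)\,f'_e(n_e(s)+1)\ge\sum_{u\in F}c'_u(s_u^*,s_{-u})$. Fix $u\in F$ and a resource $e\in s_u^*$. Passing from $s$ to $(s_u^*,s_{-u})$ only player $u$ changes strategy, so the load on $e$ becomes $n_e(s)+1$ if $e\notin s_u$ and stays $n_e(s)$ if $e\in s_u$; in either case it is at most $n_e(s)+1$, and since $f'_e$ is non-decreasing, $f'_e(n_e(s_u^*,s_{-u}))\le f'_e(n_e(s)+1)$. Summing over $e\in s_u^*$ bounds $c'_u(s_u^*,s_{-u})$ by $\sum_{e\in s_u^*}f'_e(n_e(s)+1)$, and summing that over $u\in F$ while re-indexing by resources --- using $n_e^F(s^*)=\abs{\{u\in F:e\in s_u^*\}}$ --- gives the claimed bound.

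Putting the pieces together, summing the per-resource hypothesis over $e$ and rearranging yields
\[
\lambda\cdot\phi^F(s^*) \ge \phi^F(s) + \sum_{e\in E} n_e^F(s^*)\,f'_e(n_e(s)+1) - \sum_{e\in E} n_e^F(s)\,f'_e(n_e(s)),
\]
and substituting the identity $\sum_{e} n_e^F(s)f'_e(n_e(s))=\sum_{u\in F}c'_u(s)$ together with the lower bound $\sum_{e} n_e^F(s^*)f'_e(n_e(s)+1)\ge\sum_{u\in F}c'_u(s_u^*,s_{-u})$ (which occurs with a $+$ sign, so the substitution preserves the inequality) gives $\lambda\, h^F(s^*)\ge\sum_{u\in F}c'_u(s_u^*,s_{-u})-\sum_{u\in F}c'_u(s)+h^F(s)$, i.e.\ strong $(\lambda,0)$-smoothness. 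The step I expect to need the most care is the bookkeeping with the frozen players: one must be consistent about which state supplies the offset $n_e^{\mathcal{N}\setminus F}(\cdot)$ inside $\phi_e^F$, so that the subgame potential genuinely equals the difference form appearing in $h^F$; and, exactly as in the analysis of Caragiannis et al., the deviation estimate tacitly relies on the modified cost functions $f'_e$ being non-decreasing.
\end{sketch}
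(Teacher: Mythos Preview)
Your proof is correct and follows essentially the same approach as the paper: sum the per-resource hypothesis over all $e\in E$ and identify the resulting terms with those of Definition~\ref{def:usmooth_subset}. You are in fact more careful than the paper's own proof, which passes from $\sum_e n_e^F(s^*)f'_e(n_e(s)+1)$ to $\sum_{u\in F}c'_u(s_u^*,s_{-u})$ by simply writing ``equivalent to'' without isolating the monotonicity of $f'_e$ needed for that step.
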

    
 
 The proof of the lemma is shifted to the Appendix~\ref{missing_proofs_apxalgo}. 
 This subset property is of particular importance for the algorithm we present in Section~\ref{CG_stretch} to compute an approximate equilibrium, but may be of independent interest as well. We are not aware of other approximation algorithms that can guarantee this property as well. We would like to remark that all references to $(\lambda, 0)$-smoothness in Section~\ref{CG_stretch} imply strong $(\lambda, 0)$-smoothness. 


   	\section{Approximate Equilibria in Congestion Games}\label{CG_stretch}
   	In this section we aim at improving the approximation factor of an approximate pure Nash equilibria in congestion games with arbitrary non-decreasing resource cost functions. 
	We extend an  algorithm based on Caragiannis et al.~\cite{Caragiannis2011} to compute an approximate pure Nash equilibrium in congestion games with arbitrary non-decreasing resource cost functions. 
	A key element of this algorithm is the so called stretch of a (sub-) game. This is the worst case ratio of the potential function at an equilibrium and the global minimum of the potential. 
 
	This algorithm  generates a sequence of improving moves  that converges to an approximate pure Nash equilibrium in polynomial number of best-response moves. The idea is to divide the players into blocks based on their costs and hence their prospective ability to drop the potential of the game. In each phase of the algorithm, players of two consecutive blocks are scheduled to make improving moves starting with the blocks of players with high costs. One block only makes $q$-moves, which are improvements by a factor of at least $q$ which is close to $1$. The other block does $p$-moves, where $p$ is slightly larger than the stretch of a $q$-approximate equilibrium and slightly smaller than the final approximation factor.
	
	The key idea here is that blocks first converge to a $q$-approximate equilibrium and thereby generate a state with a stretch of approximately $p$. Later, when players of a block are allowed to do $p$-moves, there is not much potential left to move. In particular, there is no significant influence on players of blocks that moved earlier possible. This finally results in the approximation factor of roughly $p$.
	We modify the algorithm in~\cite{Caragiannis2011} by changing the costs seen by the players during their $q$-moves to be a set of modified cost functions $(f^\prime_e)_{e \in E}$, satisfying smoothness condition of Lemma~\ref{lem:subset} for some constant $\lambda>0$, 
	and this results in a $\lambda(1 + \epsilon)$-approximate pure Nash equilibrium. Note that, $\lambda$ is the stretch with respect to the modified cost functions. For the sake of completeness we present the algorithm as Algorithm~\ref{algo}, but note that only the definition of $\theta(q)$  using $\lambda$, the definition of $p$ in Line 1, and the use of the modified cost functions $(f^\prime_e)_{e \in E}$ in Line~\ref{line11} has been changed. 
    \begin{algorithm*}[ht]
		\caption{Computing a $\lambda (1 + \epsilon)$-approximate pure Nash equilibria in congestion games.}		\label{algo}
		\begin{algorithmic}[1]
			\renewcommand{\algorithmicrequire}{\textbf{Input:}}
			\renewcommand{\algorithmicensure}{\textbf{Output:}}
			\renewcommand{\algorithmicforall}{\textbf{foreach}}
			\REQUIRE Congestion game ${G} = \left(\mathcal{N},E,\left(S_u\right)_{u\in \mathcal{N}}, \left(f_e\right)_{e\in E}\right)$, $f^\prime :=(f^\prime_e)_{e \in E}$ and  $\epsilon>0$.
			\ENSURE  A state of ${G}$ in $\lambda (1 + \epsilon)$-approximate pure Nash equilibrium.
			\STATE Set $q = \left(1 + \frac{1}{N^c}\right)$ , $p = \left(\frac{1}{\theta\left(q\right)} - \frac{1 + q+  2 \lambda}{N^c}\right)^{-1}$, $c = 10 \log\left(\frac{\lambda}{\epsilon}\right)$, $\Delta = \max_{e \in E} \frac{f_e(N)}{f_e(1)}$ and $\theta(q) = \frac{\lambda}{1 + \frac{1-q}{q} N\lambda}$, where  $\lambda:= \min\{\lambda^\prime \in \mathbb{R^+} : \lambda^\prime$ satisfies Lemma~\ref{lem:subset} with respect to $(f^\prime_e)_{e \in E}$\}.
			\FORALL {$u\in \mathcal{N}$}
			\STATE set $\mathit{\ell_u} = c_u\left(\mathcal{BR}_u\left(0\right)\right)$;
			\ENDFOR
			\STATE Set $\mathit{\ell_{min}} = \min_{u\in \mathcal{N}}\mathit{\ell_u}$, $\mathit{\ell_{max} = \max_{u\in \mathcal{N}}\mathit{\ell_u}}$ and $\hat{z} = 1 + \lceil \log_{2\Delta N^{2c+2}}\left(\ell_{max}\slash \ell_{min}\right)\rceil$;
			\STATE Assign players to blocks $B_1, B_2,\cdots ,B_{\hat{z}}$ such that
			\\ $u\in B_i \Leftrightarrow \mathit{\ell_u} \in \left(\mathit{\ell_{max}}\left(2\Delta N^{2c+2}\right)^{-i}, \mathit{\ell_{max}}\left(2\Delta N^{2c+2}\right)^{-i+1}\right]$;
			\FORALL {$u \in N$}
			\STATE set the player $u$ to play the strategy $s_u \leftarrow \mathcal{BR}_u\left(0\right)$;
			\ENDFOR
			\FOR {phase $i \leftarrow 1$ to $\hat{z}-1$ such that $B_i \ne \emptyset$}
			\WHILE {$\exists u \in B_i$ with a $p$-move w.r.t the original cost $f$ or $\exists u \in B_{i+1}$ with a $q$-move w.r.t to modified cost $f^\prime$} \label{line11}
			\STATE $u$ deviates to that best-response strategy $s_u \leftarrow  \mathcal{BR}\left(s_1,\cdots,s_n\right)$.
			\ENDWHILE
			\ENDFOR
		\end{algorithmic}
	\end{algorithm*}


	\subsection{Analysis of the Algorithm}
	\label{analysis_sec}
	We are now ready to prove Theorem~\ref{main_thm}, by restating it as follows. The proof of the theorem follows the proof scheme of Caragiannis et al.~\cite{Caragiannis2011}, which we have to rework to accommodate for our modifications stated above.
    
    	\begin{theorem}
	\label{C_T_1}
		For every constant $\epsilon>0$ and every set of  cost functions $\left(f_e^\prime\right)_{e \in E}$ which are strongly $(\lambda, 0)$-smooth with respect to $\phi(s)$, Algorithm~\ref{algo} computes a $\lambda \left(1 + \epsilon\right)$-approximate equilibrium for every congestion game with non-decreasing cost functions, in number of steps which is polynomial in the number of players, $\Delta:=\frac{f(N)}{f(1)}$, $\lambda$, and $1/\epsilon$.
	\end{theorem}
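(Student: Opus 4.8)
The plan is to follow the structure of the Caragiannis et al.\ argument, keeping careful track of where the modified cost functions $(f'_e)_{e\in E}$ enter and where the strong $(\lambda,0)$-smoothness property is used in place of the exact potential. The analysis splits into two parts: (i) showing that the \textbf{WHILE} loop in each phase terminates after polynomially many moves, and (ii) showing that the final state is a $\lambda(1+\epsilon)$-approximate equilibrium with respect to the original costs $f$. For (i), the key quantity is the potential $\phi$ of the original game. A $q$-move of a player $u\in B_{i+1}$ with respect to $f'$ must still strictly decrease $\phi$ — this is the crucial point where the proof departs from the potential-game setting, because the subgame with costs $f'$ need not be a potential game. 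Here one needs a quantitative lemma: if $u$ has a $q$-move w.r.t.\ $f'$, then $\phi$ drops by at least a $\theta(q)$ factor of $u$'s current cost, where $\theta(q)$ is exactly the expression defined in Line~1. I would prove this by applying the strong $(\lambda,0)$-smoothness inequality of Lemma~\ref{lem:subset} to the singleton subgame $F=\{u\}$ (or more precisely, relating the change in $\phi$ to the change in $c'_u$ via the per-resource smoothness inequality of Lemma~\ref{lem:subset}), which bounds $c'_u(s)$ from below in terms of $\phi$; combining with the fact that a $q$-move decreases $c'_u$ multiplicatively then yields the drop in $\phi$. A $p$-move w.r.t.\ the original cost $f$ decreases $\phi$ by construction of the potential. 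Since $\phi$ is bounded above by $N$ times the max cost and below by the min cost, and each move causes a multiplicative drop, a standard block-by-block counting argument (as in~\cite{Caragiannis2011}) bounds the total number of moves polynomially in $N$, $\Delta$, $\lambda$, $1/\epsilon$.

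For part (ii), I would argue as in~\cite{Caragiannis2011} that after phase $i$ completes, players in blocks $B_1,\dots,B_i$ are in an approximate equilibrium that is not substantially disturbed by later phases. The core inductive claim is: when phase $i$ ends, every player in $B_i$ has no $p$-move w.r.t.\ $f$, and this is preserved because in subsequent phases only players in $B_{i+1}, B_{i+2},\dots$ move, and their total cost — hence the perturbation they can cause on resources touched by $B_i$ — is negligible compared to the costs of $B_i$ players, by the geometric separation $2\Delta N^{2c+2}$ between consecutive blocks. The role of $\lambda$ enters in bounding, at the end of a phase where $B_{i+1}$ has converged to a $q$-approximate equilibrium w.r.t.\ $f'$, the potential of that sub-configuration by $\lambda$ times the optimal potential (the stretch bound), via Theorem~\ref{theo:smoothness} applied with objective $h=\phi$; this is what forces $p$ to be chosen slightly larger than $\lambda$ and yields the final factor $\lambda(1+\epsilon)$. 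I would need to verify that the choice $p = \big(\tfrac{1}{\theta(q)} - \tfrac{1+q+2\lambda}{N^c}\big)^{-1}$ is indeed $\le \lambda(1+\epsilon)$ for the given $c$, and $> \text{stretch}$, which is a routine estimate using $q = 1 + N^{-c}$ and $c = 10\log(\lambda/\epsilon)$.

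The main obstacle is the non-potential nature of the modified subgame. In the original proof, every improving move monotonically decreases $\phi$, and the stretch bound is an immediate consequence of $\phi$ being a potential. Here I must separately (a) re-derive that $q$-moves w.r.t.\ $f'$ still decrease the \emph{original} potential $\phi$ by a controlled amount — this is where strong $(\lambda,0)$-smoothness, rather than mere $(\lambda,0)$-smoothness, is essential, since the counting argument needs the smoothness inequality to hold for the induced subgame on each block $B_{i+1}$ with the remaining players frozen; and (b) re-derive the stretch bound $\phi(\text{state at end of }q\text{-convergence}) \le \lambda\cdot\phi(\text{opt})$ purely from the smoothness inequality (Theorem~\ref{theo:smoothness} with $h=\phi$) rather than from a potential-function identity. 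Once these two modular facts are in hand, the remainder of the proof — the block decomposition, the perturbation bounds between phases, and the polynomial move count — transfers from~\cite{Caragiannis2011} essentially verbatim, with $\Delta$ appearing because the cost functions are arbitrary non-decreasing rather than linear.
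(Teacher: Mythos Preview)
Your plan for part~(ii) is broadly on target and matches the paper's structure: the stretch bound (Lemma~\ref{C_L_3}) does come from the strong smoothness inequality applied to the induced subgame, and the block-separation argument carries over from~\cite{Caragiannis2011}. The genuine gap is in part~(i).

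You claim that a $q$-move with respect to $f'$ must still strictly decrease the \emph{original} potential $\phi$, and propose to derive this from strong smoothness on a singleton subgame. This is false. Strong smoothness on singletons does give you the sandwich $f_e(i)\le f'_e(i)\le \lambda f_e(i)$ (the paper's Lemma~\ref{lemma:fprime}), but from $c'_u(s'_u,s_{-u}) < c'_u(s)/q$ this only yields $c_u(s'_u,s_{-u}) \le c'_u(s'_u,s_{-u}) < \tfrac{\lambda}{q}\, c_u(s)$, and since $\lambda/q>1$ you cannot conclude $c_u(s'_u,s_{-u})<c_u(s)$, hence not $\phi(s'_u,s_{-u})<\phi(s)$. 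The paper explicitly flags this: in the proof of Lemma~\ref{C_L_7} the potential change $\xi_i$ due to $B_{i+1}$'s $q$-moves ``could be negative as the players of $B_{i+1}$ use the modified cost functions.''

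The paper's fix runs in the opposite direction: rather than showing $f'$-moves decrease $\phi$, it tracks the \emph{modified} potential $\widetilde{\phi}(s)=\sum_{e}\sum_{i=1}^{n_e(s)} f'_e(i)$ and shows that $f$-moves decrease it. A $q$-move w.r.t.\ $f'$ trivially decreases $\widetilde{\phi}$ (it is the exact potential for the $f'$-game). For a $p$-move w.r.t.\ $f$, the sandwich gives $\widetilde{\phi}(s'_u,s_{-u})-\widetilde{\phi}(s)\le \lambda\, c_u(s'_u,s_{-u})-c_u(s)<0$ because $p>\lambda$ (Lemma~\ref{approx_potential}); indeed a $p$-move w.r.t.\ $f$ is already a $q$-move w.r.t.\ $f'$ (Lemma~\ref{L_patleastq}). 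Thus every move in the \textbf{WHILE} loop drops $\widetilde{\phi}$ by at least a $(q-1)$-fraction of the mover's $f'$-cost, and the polynomial step count (Lemma~\ref{C_L_8}) follows by bounding $\widetilde{\phi}$ via the block structure, $\Delta$, and the sandwich again. Your singleton-smoothness idea does recover the key sandwich bound --- but it has to be used to control $\widetilde{\phi}$, not $\phi$.
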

\begin{proof}The algorithm partitions the players into blocks $B_1, B_2,\ldots,B_{\hat{z}}$ such that, a player 
\begin{align*}
	u \in B_i \Leftrightarrow \mathit{\ell_u} \in \left(b_{i+1},b_{i}\right],
\end{align*}
where \[b_i := \mathit{\ell_{max}}\left(2\Delta N^{2c+2}\right)^{-i+1} \text{and}~b_{i+1} := \mathit{\ell_{max}}\left(2\Delta N^{2c+2}\right)^{-i}\] define the boundaries of the block $B_i$. The partitioning is such that, the algorithm partitions the players to $\hat{z} = 1 + \lceil\log_{2\Delta N^{2c+2}}\left(\ell_{max}\slash \ell_{min}\right)\rceil \le N$ blocks where for any block $B_i$ the ratio $b_i\slash b_{i+1} = 2\Delta N^{2c+2}$ and $\Delta = \max_{e \in E} \frac{f_e(N)}{f_e(1)}$. Note that, for cost functions which are polynomials of maximum degree $d$ with non negative coefficients, $\Delta$ is polynomial in the number of players. Next, the algorithm enforces every player $u \in \mathcal{N}$ to play their optimistic strategy by which they incur a cost that could be at most $\Delta b_i$. This also defines the initial state of the game denoted by $s^0$, where $s^i$ denotes the state of the game after the phase $i$. The sequence of  moves in the game is divided into multiple phases determined by the player blocks. The phases of the game progresses from $1 \rightarrow \hat{z}-1$. During a phase $i$ of the game, only players in the block $B_i$ and $B_{i+1}$ make moves. Particularly, players in $B_i$ make their $p$-move using the original cost function $f$ and the players in $B_{i+1}$ make their $q$-move, but now using the modified cost function $f^\prime$ that satisfies Lemma~\ref{lem:subset} for some $\lambda>0$. 

For player $u \in \mathcal{N}$ a deviation to a strategy $s_u^\prime$ is called a $p\text{-move}$ if $c_u(s_u^\prime, s_{-u}) < \frac{c_u(s)}{p}$. Similarly a $q$-move with respect to the modified cost functions is defined as a move with $c'_u(s_u^\prime, s_{-u}) < \frac{c'_u(s)}{q}$. A phase $i$ is considered to be complete at a state $s^i$, if $\forall u \in B_i,~c_u(s^i) \le p\cdot c_u(s_u^\prime, s_{-u}^i)$ i.e., the players in $B_i$ are in a $p$-equilibrium. Similarly, $\forall u \in B_{i+1}, ~c'_u(s^i) \le q\cdot c'_u(s_u^\prime, s_{-u}^i)$ i.e., the players in $B_{i+1}$ are in a $q$-equilibrium w.r.t. the modified cost functions. All the other players i.e., $\mathcal{N}\setminus (B_i \cup B_{i+i})$ are frozen to their strategy associated with the phase $i-1$. Also, note that players in a block $B_i$ are frozen to their optimistic strategy $\mathcal{BR}_u(0)$ until phase $i-1$. The players involved during a phase $i$ are denoted by $R_i$. Now since during the phase $i$ only players in $R_i$ make their best-response moves, the latency introduced by these players on a particular resource of the game will be denoted as $f_e^{R_i}$. Moreover, since the players $\mathcal{N}\setminus R_i$ are frozen to their strategy of phase $i-1$, the latency incurred by a player $u \in R_i$ using a resource $e \in E$ can be expressed as $f_e^{R_i}(n_e^{R_i}(s)) = f_e(n_e^{R_i}(s) + n_e^{\mathcal{N}\setminus R_i}(s))$, where $n_e^{R_i}(s)$ denotes the number of players $u \in R_i$ in the state $s$ using the resource $e$ and $n_e^{\mathcal{N}\setminus R_i}(s)$ denotes the number of players on the resource $e$ in the state $s$ that do not participate in the phase $i$ of the game. Furthermore, the potential amongst the players in $R_i$ will be denoted as $\Phi_{R_i}$. 

  	 Here, we have to take into account that the game played by the players from $B_i \cup B_{i+1}$ in phase $i$ is no longer a potential game as the players use different cost functions. However, we can show that the strong smoothness condition of Lemma~\ref{lem:subset} guarantees that the  values of the modified cost functions $f^\prime$ can be conveniently bounded.

	\begin{lemma}
\label{lemma:fprime}
             Let $f^\prime$ to be the set of modified cost functions satisfying strong $(\lambda, 0)$-smoothness condition for some $\lambda>0$ and $f$ to be the original cost functions. Then for all $i\ge 1$, \[f_e(i) \le f_e'(i) \le \lambda  f_e(i).\]
\end{lemma}
\begin{proof}Using the strong smoothness condition of Lemma~\ref{lem:subset} and setting $n=0$, $m=1$, and $z=i-1$ gives, \[\lambda f_e(i) \ge f_e'(i). \] Furthermore, with $m=0$, $n=1$, and $z=i-1$ we have that, \[f_e'(i) \ge  f(i).\]
\end{proof}

	To bound the stretch of any (sub-) game in a $q$-approximate equilibrium  the following lemma is useful. In its proof we handle the modified cost functions which then leads to value of $\theta(q):= \frac{\lambda}{1 + N\lambda \frac{1-q}{q}}$ (cf. Algorithm~\ref{algo}) that depends on the stretch $\lambda$ of the modified cost functions, instead of the original ones. We remark that for this lemma, the  property that the induced subgames are also smooth (Lemma~\ref{lem:subset}) is crucial.
	
	\begin{lemma}\label{C_L_3}
		Let $s$ be any $q$-approximate equilibrium with respect to the modified cost function and $s^*$ be a strategy profile with minimal potential. Then for every $F \subseteq \mathcal{N}$, $\phi_F(s) \le \theta(q)\cdot \phi_F(s^*).$
	\end{lemma}
\begin{proof}Let $c'_u$ and $c_u$ denote the cost of a player $u \in F$ using the modified cost function $f'$ and the original cost function $f$, respectively. From the definition of $q$-approximate equilibrium we have that,\[c'_u(s) \le q\cdot c'_u(s_u^*, s_{-u}).\]
		Then, using Lemma \ref{lemma:fprime},  
		\begin{align*}
			c'_u(s_u^*, s_{-u}) - c'_u(s) 
			&\ge \frac{1-q}{q} c'_u(s) \ge   \frac{1-q}{q}\lambda c_u(s) 
			\ge  \frac{1-q}{q} \lambda\Phi_F(s),
		\end{align*}
		summing the above inequality for all players $u \in F$ gives, 
		\[\sum_{u\in F}\left(c'_u(s_u^*, s_{-u}) - c'_u(s)\right) \ge    \frac{1-q}{q}N\lambda \Phi_F(s). \tag{2}\]
		
		Then, by the smoothness condition of Lemma~\ref{lem:subset} and (2) we have,
		\begin{align*}
		    \Phi_F(s) 
		    &\le \lambda\cdot \Phi_F(s^*) - \Bigg(\sum_{u\in F}\left(c'_u(s_u^*, s_{-u}) - c'_u(s)\right)\Bigg)
			\\&\le \lambda\cdot \Phi_F(s^*) -  \frac{1-q}{q} N \lambda \Phi_F(s).
			\\
		    \left( 1 +    \frac{1-q}{q}N\lambda\right) \Phi_F(s) &\le  \lambda\cdot \Phi_F(s^*)\\
		   \Phi_F(s) &\le  \frac{\lambda}{1 +   \frac{1-q}{q} N\lambda }\cdot \Phi_F(s^*).\tag{3}
		\end{align*} 
		Setting $\theta(q) = \frac{\lambda}{1 + N\lambda \cdot \frac{1-q}{q}}$ in (3)
		concludes the proof.
	\end{proof}

		\begin{claim}[Caragiannis et al.~\cite{Caragiannis2011}]\label{F_C_1}
			For any state $s$ of a congestion game with a set of players $\mathcal{N}$, a set of resource $E$ and latency functions $(f_e)_{e\in E}$, it holds that \[\sum_{e\in E}f_e(n_e(s)) \le \phi(s) \le \sum_{u\in \mathcal{N}} c_u(s).\]
		\end{claim}
		\begin{lemma}[Caragiannis et al.~\cite{Caragiannis2011}]\label{F_L_1}
			Let $s$ be a state of the congestion game $\mathcal{G}$ with a set of players $\mathcal{N}$ and let $F\subseteq \mathcal{N}$. Then, $\phi(s) \le \phi_F(s) + \phi_{\mathcal{N}\setminus F}$ and $\phi(s) \ge \phi_F(s)$.
		\end{lemma}
		\begin{lemma}[Caragiannis et al.~\cite{Caragiannis2011}]\label{F_L_2}
			Let $c(u)$ denote the cost of player $u \in R_i$ just after making his last move within phase $i$. Then, \[ \phi_{R_i}(s^i) \le \sum_{u\in R_i}c(u).\]
		\end{lemma}
We now bound the potential of the set of players $R_i \subseteq B_i \cup B_{i+1}$ that move in phase $i$. Most importantly, the players of $B_i$, were in an $q$-approximate equilibrium with respect to $c'_u$ at the end of the previous round. Hence, for every subset of $B_i$, we can exploit Lemma~\ref{C_L_3} to obtain a small upper bound on the potential amongst players $R_i$ participating in a phase $i$ at the beginning of the phase. 
Recall that for a phase $i$, $b_i := \mathit{\ell_{max}}\left(2\Delta N^{2c+2}\right)^{-i+1}$ and $s^i$ denotes the state of the game after the execution of phase $i$. 
 		\begin{lemma}
 			\label{C_L_7}
 			For every phase $i\ge 2$, it holds that $\phi_{R_i}(s^{i-1}) \le \frac{b_i}{N^c}$.
 		\end{lemma}
 		\begin{proof}
 		Let us assume that inequality does not hold and $\phi_{R_i}(s^{i-1}) > \frac{b_i}{N^c}$. Then, we show that the players $u \in R_i \cap B_i$ were not in a $q$-approximate equilibrium w.r.t. the modified cost functions $f'$
 		in the phase $i-1$ and thus violating the dynamics of the algorithm.
			
			Note that the players in the block $B_{i+1}$ have not moved until the phase $i-1$ and are in their optimistic strategy $\mathcal{BR}_u(0)$ as per the initial settings of the algorithm. As per the definition of blocks the cost incurred by a player $u \in B_{i+1}$ is at most $\Delta b_{i+1}$. So the total cost incurred by all the players in $R_i \cap B_{i+1}$, i.e., $\sum_{u \in R_i \cap B_{i+1}}\Delta b_{i+1} \le N\Delta b_{i+1}$. The potential amongst players in $R_i \cap B_{i+1}$ is bounded by,
			\begin{align*}
				\Phi_{R_i\cap B_{i+1}}(s^{i-1})\le N\Delta b_{i+1}. \tag{4}
			\end{align*}
			Using Lemma~\ref{F_L_1}, (4), and our assumption on $\Phi_{R_i}(s^{i-1})$  we get,
			\begin{align*}
				\Phi_{R_i\cap B_i}(s^{i-1})
				&\ge \Phi_{R_i}(s^{i-1}) - \Phi_{R_i\cap B_{i+1}}(s^{i-1})
				\\&>\frac{b_i}{N^c}-N\Delta b_{i+1}
				\\&=\left(\frac{2\Delta N^{2c+2}}{N^c}-N\Delta\right)b_{i+1}
				\\&\ge N^{c+1}\Delta b_{i+1}.	\tag{5}
			\end{align*}

			Let us denote by $c(u)$ the latency of a player $u \in R_i \cap B_i$ after he made his last move during the phase $i$.
			So, the change in potential contributed by the player $u$ in the phase $i$ is at least $(p-1)c(u)$. Let us denote by $\xi_i$ the decrease of potential due to the moves of the players in $B_{i+1}$ in phase $i$. Note that $\xi_i$ could be negative as the players of $B_{i+1}$ use the modified cost functions. The change in potential due to the moves by all the players in $R_i$ 
			is given by $(p-1) \sum_{u\in R_i\cap B_i}c(u) + \xi_i$ and we can bound
			\begin{align*}
				&(p-1)\sum_{u\in R_i\cap B_i}c(u) 
				\\&\le \Phi_{R_i}(s^{i-1}) - \Phi_{R_i}(s^i) - \xi_i	
				\\&\le \Phi_{R_i\cap B_i}(s^{i-1}) + \Phi_{R_i\cap B_{i+1}}(s^{i-1}) - \Phi_{R_i}(s^i)  - \xi_i
				\\&\le \Phi_{R_i\cap B_i}(s^{i-1}) + N\Delta b_{i+1} - \Phi_{R_i}(s^i) - \xi_i
				\\&<\left(1+\frac{1}{N^c}\right)\Phi_{R_i \cap B_i}(s^{i-1}) - \Phi_{R_i}(s^i)  - \xi_i. \tag{6}
			\end{align*}
			To account in the change of potential $\xi_i$ from $s^{i-1}$ to $s^i$ due the players in $B_{i+1}$, we observe that the latency of a player $u \in R_i \cap B_{i+1}$ was at most $c_u(s^{i-1}) \le  \Delta b_{i+1}$ as he was put by the algorithm on $\mathcal{BR}(0)$. By Lemma \ref{lemma:fprime}, his cost with respect to the modified cost function on this strategy are $c'_u(s^{i-1}) \le  \lambda \Delta b_{i+1}$. Since, he may always switch back to this strategy, his cost in $s_i$ can be bounded by 
			
			\[c_u(s^{i}) \le c'_u(s^{i}) < q  \lambda \Delta b_{i+1}.\]
			
			This yields a bound on the change  of the potential of \[  \xi_i > - q N \lambda \Delta b_{i+1}.\]

			Now we can bound the potential in $s^i$ by the latency  of the players. We then can use  inequality (6) for the players in $R_i \cap B_{i}$. By Lemma \ref{lemma:fprime}, the latency of a player $u \in R_i \cap B_{i+1}$ after he made his last $q$-move during the phase $i$ is at most $\lambda\Delta b_{i+1}$.
			\begin{align*}
				\Phi_{R_i}(s^i) 
				&\le \sum_{u \in R_i}c(u)
				\\&= \sum_{u \in R_i \cap B_{i+1}}c(u) + \sum_{u \in R_i\cap B_i}c(u)
				\\&< N \lambda \Delta b_{i+1} + \frac{1}{p-1}\left(1+\frac{1}{N^c}\right)\Phi_{R_i \cap B_i}(s^{i-1})	
				\\&\quad-\frac{1}{p-1}\Phi_{R_i}(s^i)
				-\frac{1}{p-1} \xi_i
				\\&\le \frac{\lambda }{N^c} \Phi_{R_i \cap B_i}(s^{i-1}) + \frac{1}{p-1}\left(1+\frac{1}{N^c}\right)\Phi_{R_i \cap B_i}(s^{i-1})
				\\&\quad-\frac{1}{p-1}\Phi_{R_i}(s^i)
				+ \frac{q}{p-1}  N \lambda \Delta b_{i+1}
			\\&\le \frac{\lambda }{N^c} \Phi_{R_i \cap B_i}(s^{i-1}) + \frac{1}{p-1}\left(1+\frac{1}{N^c}\right)\Phi_{R_i \cap B_i}(s^{i-1})
				\\&\quad-\frac{1}{p-1}\Phi_{R_i}(s^i)
				+ \frac{q}{p-1} \frac{\lambda }{N^c} \Phi_{R_i \cap B_i}(s^{i-1}) 
			\\&\le \frac{\lambda(p-1) }{(p-1)N^c} \Phi_{R_i \cap B_i}(s^{i-1}) + \frac{1}{p-1}\left(1+\frac{1}{N^c}\right)\Phi_{R_i \cap B_i}(s^{i-1})
				\\&\quad-\frac{1}{p-1}\Phi_{R_i}(s^i)
				+ \frac{q}{p-1} \frac{\lambda }{N^c} \Phi_{R_i \cap B_i}(s^{i-1}) 
			\\&\le \frac{1}{p-1}\left(1+\frac{(p-1)\lambda + 1 +q\lambda}{N^c}\right)\Phi_{R_i\cap B_i}(s^{i-1})
				\\&\quad-\frac{1}{p-1}\Phi_{R_i}(s^i)
		\end{align*}
		equivalent to,
		\begin{align*}
		    \frac{p}{p-1}\Phi_{R_i}(s^i)&< \frac{1}{p-1}\left(1+\frac{(p-1)\lambda + 1 +q\lambda}{N^c}\right)\Phi_{R_i\cap B_i}(s^{i-1})
		    \\&=\frac{p}{p-1}\left(\frac{1}{p}+\frac{(p-1)\lambda + 1 +q\lambda}{pN^c}\right)\Phi_{R_i\cap B_i}(s^{i-1})
		    \\&<\frac{p}{p-1}\left(\frac{1}{p}+\frac{\lambda}{N^c}+\frac{1 +q\lambda}{pN^c}\right)\Phi_{R_i\cap B_i}(s^{i-1})
		    \\&<\frac{p}{p-1}\left(\frac{1}{p}+\frac{\lambda}{N^c}+\frac{1 +q}{N^c}\right)\Phi_{R_i\cap B_i}(s^{i-1}).
		\end{align*}
		Therefore,
		\begin{align*}
		    \Phi_{R_i}(s^i)&<\left(\frac{1}{p}+\frac{1 +q+\lambda}{N^c}\right)\Phi_{R_i\cap B_i}(s^{i-1}).\tag{7}
		\end{align*}

			Observe that during the phase $i-1$, the players in the block $R_i \cap B_{i+1}$ have not deviated from their initial strategy of $\mathcal{BR}_u(0)$ in the state $s^{i-1}$. However, this cannot be guaranteed in the state $s^i$ where the players in $R_i \cap B_{i+1}$ could have made their best-response $q$-moves. Now in order to compare the potential amongst the players in $R_i \cap B_i$ in the state $s^{i-1}$ and $s^i$, it is important that the players in $R_i \cap B_{i+1}$ have the same strategy as it had in phase $i-1$. So, we construct the following thought experiment. Let $\hat{s}$ be a state where player in $R_i \cap B_i$ play their strategy in $s^i$ and players $u \in \mathcal{N}\setminus (R_i \cap B_i)$ play their strategy in $s^{i-1}$. Since, the cost incurred by players in $R_i \cap B_{i+1}$ in the state $s^i$ after deviating to their strategy in $s^{i-1}$ is at most $N\lambda\Delta b_{i+1}$.
 
			The potential amongst the players in $R_i$ in the state $\hat{s}$ is given by,
			\begin{align*}
				\Phi_{R_i}(\hat{s})
				&\le \Phi_{R_i\cap B_i}(s^i) + N\lambda\Delta b_{i+1}
				\\&\le \Phi_{R_i}(s^i) + N\lambda\Delta b_{i+1}. \tag{8}
			\end{align*}

			Using Lemma~\ref{F_L_1} we get,
			\begin{align*}
				\Phi_{R_i\cap B_i}(\hat{s})
				&\le \Phi_{R_i}(\hat{s})
				\\\text{Applying inequality (8) we get,}
				\\&\le \Phi_{R_i}(s^i) + N\lambda\Delta b_{i+1}
		        \\\text{Then from inequality (5) and (7),}
		    	\\&<\left(\frac{1}{p}+\frac{1+q+2\lambda}{N^c}\right)\Phi_{R_i\cap B_i}(s^{i-1})
				\\&=\frac{1}{\theta(q)}\Phi_{R_i\cap B_i}(s^{i-1}).
			\end{align*}
			The last equality follows from the definition of $p$ in Algorithm~\ref{algo}.

			If $s^*$ were to be the state in which the game attained its global minimum, then the last inequality effectively means that the potential amongst the players in $R_i\cap B_i$  in state $s^*$ i.e., $\Phi_{R_i}(s^*)$ is strictly smaller than $\frac{1}{\theta(q)}\Phi_{R_i\cap B_i}(s^{i-1})$ which violates the claim in Lemma~\ref{C_L_3} to conclude that players in $R_i\cap B_i$ are not in a $q$-equilibrium at the end of the phase $i-1$. Hence, contradicting our assumption.
		\end{proof}

To analyze convergence, we have to take into account the fact that players use different latency functions and, hence, convergence is no longer guaranteed by Rosenthal's potential function.  However, it turns out that the Rosenthal potential with respect to the modified cost functions can serve as an approximate potential function, i.e., it also decreases for the $p$-moves of players using the original cost functions. 
		\begin{lemma}\label{approx_potential}
		The Rosenthal potential $\widetilde{\phi}$ with respect to the modified cost functions $f^\prime$ is a p-approximate potential function with respect to the original cost function $f$. That is,
		\[c_u(s'_u, s_{-u}) < \frac{1}{p} c_u(s) ~\text{implies}~ \widetilde{\phi}(s'_u,s_{-u}) < \widetilde{\phi}(s),\] where, \[ \widetilde{\phi}(s) := \sum_{e \in E} \widetilde{\phi}_e(n_e(s))= \sum_{e \in E}\sum_{i =1}^{n_e(s)}f_e^\prime(i).\]
		\end{lemma}
		\begin{proof}To simplify notation let $n_e := n_e(s)$. From Lemma~\ref{lemma:fprime} we know that,
\[\widetilde{\phi_e}(n_e +1) - \widetilde{\phi_e}(n_e) = f^\prime(n_e+1) \le \lambda f(n_e+1),\]
and
\[f^\prime(n_e+1) \ge  f(n_e+1).\]
Therefore,
\[ 1 \le \frac{\widetilde{\phi}_e(n_e +1) - \widetilde{\phi}_e(n_e)}{f_e(n_e +1)} \le \lambda. \]

Using the above inequalities, the change in potential function $\widetilde{\phi}$ due to player $u$ making a $p$-move with respect to $f$, i.e.,
\begin{align*}
    \widetilde{\phi}_e(s'_u,s_{-u}) - \widetilde{\phi}_e(s) &= \sum_{e \in E}  \widetilde{\phi}_e(s'_u,s_{-u}) - \widetilde{\phi}_e(s)
    \\&=\sum_{e \in s'_u\setminus s_u}  \widetilde{\phi}_e(n_e +1) - \widetilde{\phi}_e(n_e) + \sum_{e \in s_u\setminus s'_u}  \widetilde{\phi}_e(n_e - 1) - \widetilde{\phi}_e(n_e)
    \\&\le\sum_{e \in s'_u\setminus s_u} \lambda \cdot f_e(n_e+1) - \sum_{e \in s_u\setminus s'_u}   f_e(n_e)
    \\&\le\lambda \left(\sum_{e \in s'_u\setminus s_u} f_e(n_e  +1)+ \sum_{e \in s'_u \cap s_u} f_e(n_e)\right) 
    \\&\qquad- \left(\sum_{e \in s_u\setminus s'_u}   f_e(n_e)+ \sum_{e \in s'_u \cap s_u} f_e(n_e)\right)
    \\&=\lambda \cdot c_u(s'_u, s_{-u}) - c_u(s)
    \\&\le p \cdot c_u(s'_u, s_{-u}) - c_u(s).
\end{align*}
The last inequality is due to the choice of $p$ in Algorithm~\ref{algo} such that it is slightly larger than $\lambda$.
		\end{proof}
		
		The following lemma exhibits a even stronger property. It shows that $p$-moves with respect to the original cost functions are $q$-moves with respect to the modified cost functions.
		\begin{lemma}\label{L_patleastq}
		Let $u \in \mathcal{N}$ be a player that makes a p-move with respect to the original cost function $f$. Then,
		\[p\cdot c_u(s^\prime_u, s_{-u}) - c_u(s)\ge q \cdot c^\prime_u(s^\prime_u, s_{-u}) - c^\prime_u(s),\] where $c_u$, and $c^\prime_u$ are the cost of the player $u$ with respect to $f$ and $f^\prime$, respectively.
		\end{lemma}
		\begin{proof}Let us recall the definition of $p$, $q$, and $\theta(q)$ in Algorithm~\ref{algo}, \[ p := \left( \frac{1}{\theta(q)} - \frac{1+2\lambda+q}{N^c}\right)^{-1}\]
		\[q := \left( 1 + \frac{1}{N^c}\right)\]
		\[\theta(q) := \frac{\lambda}{1 + \frac{1-q}{q}N\lambda}\]
		Observe that $p \ge \theta(q)$. Therefore,
		         \begin{align*}
		             p\cdot c_u(s^\prime_u, s_{-u}) - c_u(s) &\ge \theta(q) \cdot c_u(s^\prime_u, s_{-u}) - c_u(s)
		             \\&= \frac{\lambda}{1 + \frac{1-q}{q}N\lambda} \cdot c_u(s^\prime_u, s_{-u}) - c_u(s)
		             \\&=  \frac{q\lambda}{ 1 + \frac{1}{N^c}(1 - N\lambda)} \cdot c_u(s^\prime_u, s_{-u}) - c_u(s)
		             \\&\ge q\lambda \cdot c_u(s^\prime_u, s_{-u}) - c_u(s).
		             \\\text{Then, from Lemma~\ref{lemma:fprime} we have that,}
		             \\p\cdot c_u(s^\prime_u, s_{-u}) - c_u(s)&\ge q \cdot c^\prime_u(s^\prime_u, s_{-u}) - c^\prime_u(s).
		         \end{align*}
		\end{proof}
		
		  Using Lemma~\ref{C_L_7} and~\ref{L_patleastq}, we can  bound the runtime which depends on $\Delta$ to allow for arbitrary non-decreasing functions. 
	    \begin{lemma}
			\label{C_L_8}
			The algorithm terminates after at most $\mathcal{O}(\lambda\Delta^3N^{5c+5})$ best-response moves.
		\end{lemma}
		  \begin{proof}The proof follows from Lemma~\ref{approx_potential} and Lemma~\ref{L_patleastq}. Again, we denote $f^\prime$ to be the modified cost functions 
		  and $f$ to be the original cost functions.

			Let us recall that the algorithm partitions the sequence of best-response moves in the game into $\hat{z} - 1$ phases, where $\hat{z} = 1 + \lceil\log_{2\Delta n^{2c+2}}\left(\mathit{\ell_{max}}\slash\mathit{\ell_{min}}\right)\rceil\le N$. In a phase $i \in \{1,\dots, \hat{z} - 1\}$, players in block $R_i \cap B_i$ make their $p$-move with respect to $f$ and players in block $R_i \cap B_{i+1}$ make their $q$-move with respect to $f'$. We will bound the number of $p$-moves and $q$-moves in any given phase $i$, using the potential function with respect to the modified cost function $f^\prime$. Lemma~\ref{approx_potential} shows that when players in the block $B_{i}$ make their $p$-moves, they also reduce the potential function with respect to the modified cost functions $f^\prime$. Lemma~\ref{L_patleastq} shows that the change in cost of a player due to a $p$-move with respect to the function $f$ is at least the change in cost due to a $q$-move with respect to $f^\prime$. Therefore, in order to bound total number of moves in any given phase $i$, it is sufficient to assume that players in block $B_i$ make $q$-moves with respect to $f^\prime$ instead of $p$-moves. 
			
			Using these we now bound the maximum number of best response moves in a phase,\\
			\textit{Phase $i=1$}:\\
			Let us assume that all players in the phase $R_1$ have a $q$-move.

			Define, 
			\[\Delta^\prime =\max_{e \in E, n \in N}\frac{f_e^\prime(n)}{f_e(1)} \le \max_{e \in E, n \in N}\frac{\lambda\cdot f_e(n)}{f_e(1)},\]

			where the inequality follows from Lemma~\ref{lemma:fprime}. Then, for any player $u \in R_1$, the maximum latency incurred by the player at the beginning of phase with respect to $f^\prime$ is at most $\Delta^\prime\cdot b_1$. Observe that the maximum potential associated with the sub-game in the phase $R_1$ with respect to the modified cost function $f^\prime$ is then at most $N\Delta^\prime b_1$.

			Also, the minimum latency experienced by the players in $R_1$ is at least $b_3$. So, when a player in $u \in R_1$ makes a best-response move, he must reduce the potential by at least $(q-1)b_3$. Then, using the fact that $b_i = 2\Delta N^{2c+2}b_{i+1}$, we obtain the number of best response moves amongst the players in $R_1$ to be at most,
			\begin{align*}
				&\frac{N\Delta^\prime b_1}{(q-1)b_3}=\frac{N\Delta^\prime\left(4\Delta^2 N^{5c+4}\right)}{N^{c}(q-1)} \le4 \lambda\Delta^3N^{5c+5}. \tag{9}
			\end{align*}
			\textit{Phase $i\ge2$}:
			Again, let us assume that all players in the $R_i$ have a $q$-move. Lemma~\ref{C_L_7} shows that for each phase $i\ge2$, the potential amongst the players $R_i$ participating in the phase $i$ at the beginning of the phase i.e., $\Phi_{R_i}(s^{i-1})$ is at most $\frac{b_i}{N^c}$.

			Therefore, due to Lemma~\ref{lemma:fprime} the potential with respect to the modified cost function is then at most $\frac{\lambda\cdot b_i}{N^c}$. By the definition of blocks the minimum latency that a player would incur is at least $b_{i+2}$. So, when a player $u$ makes his best-response move during phase $i$, he would reduce the potential of the sub-game and thus the players in $R_i$ by at least $(q-1)b_{i+2}$. Hence, using the fact that $b_i = 2\Delta N^{2c+2}b_{i+1}$, we obtain the number of best response moves amongst the players in $R_i$ to be at most,
			\begin{align*}
				&\frac{\lambda\cdot b_i}{N^c(q-1)b_{i+2}}= \frac{\lambda \left(4\Delta^2 N^{4c+4}\right)}{N^c(q-1)}\le
				4\lambda\Delta^2 N^{4c+4}. \tag{10}
			\end{align*}
			From (9) and (10) we get the desired upper bound on the number of best-response moves in the game to be at most $\mathcal{O}(\lambda\Delta^3N^{5c+5})$.
		\end{proof}
		
		The next lemma shows that when players involved in phases $i \ge 2$ make their  moves, they do not increase the cost of players in the blocks $B_1, B_2,\cdots,B_{i-1}$ significantly. 
\begin{lemma}
			\label{C_L_9}
			Let $u$ be a player that takes part in the phase $t \le  i$, 
			then it holds that, \[ c_u(s^{i+1}) \le c_u(s^i) + \frac{b_{i+1}}{N^c} + N\lambda \Delta  b_{i+2}.\]
		
		\end{lemma}	
		\begin{proof}We derive the proof by showing that if the increase in cost is greater than $\frac{b_{i+1}}{N^c} + N\lambda \Delta  b_{i+2}$, then it violates the fact that $\Phi_{R_{i+1}}(s^i) \le \frac{b_{i+1}}{N^c}$.
			
			Now, let us assume that $\exists u \in B_i$ for whom the claim does not hold i.e., 
			\begin{align*}
				c_u(s^{i+1}) > c_u(s^i) + \frac{b_{i+1}}{N^c} +N\lambda \Delta  b_{i+2}. \tag{11}
			\end{align*}
			This implies that 
			there exists a set of resources $C\subseteq s_u$ such that for each $e \in C$ it is used by at least one player in $R_{i+1}$ in the state $s^{i+1}$, thus contributing to the increase in cost of the player $u$. Then from (11) we have, 
			\begin{align*}
				\sum_{e \in C}f_e(n_e(s^{i+1})) > \frac{b_{i+1}}{N^c} +N\lambda \Delta  b_{i+2}.
			\end{align*}
			Then,
			\begin{align*}
				\Phi_{R_{i+1}}(s^{i+1}) > \frac{b_{i+1}}{N^c} +N\lambda \Delta  b_{i+2}.
			\end{align*}
			
			As the players in $R_{i+1} \cap B_{i+2}$ might have increased (or decreased) $\Phi_{R_{i+1}}$ by at most $N\lambda \Delta  b_{i+2}$ and the players  $R_{i+1} \setminus B_{i+2}$ only decreased the potential, we know that
			\begin{align*}
				\Phi_{R_{i+1}}(s^{i}) 
				&\ge \Phi_{R_{i+1}}(s^{i+1}) - N\lambda \Delta  b_{i+2}
				\\&> \frac{b_{i+1}}{N^c} + N\lambda \Delta  b_{i+2} - N\lambda \Delta  b_{i+2}
				\\&
				=\frac{b_{i+1}}{N^c}
			\end{align*}
			The last inequality violates Lemma \ref{C_L_7}. Hence, this contradicts our assumption and thus the claim holds for the player $u$.
		\end{proof}
		\begin{lemma}[Caragiannis et al.~\cite{Caragiannis2011}]
			\label{C_L_10}
			Let $u$ be a player that takes part in the phase $t \le i$ of the congestion game $\mathcal{G}$ and let $s_u^\prime$ be any strategy other than the one assigned by the algorithm during the phase $t$ of the game, then it holds that, \[c_u(s_u^\prime, s_{-u}^i) \le c_u(s_u^\prime, s_{-u}^{i+1}) + \frac{b_{i+1}}{N^c}.\]
		\end{lemma}
		\begin{proof}Assume the claim does not hold for some player $u$ i.e., 
			\begin{align*}
				c_u(s_u^\prime, s_{-u}^i) > c_u(s_u^\prime, s_{-u}^{i+1}) + \frac{b_{i+1}}{N^c} .
			\end{align*} 
		  This means that during the phase $i$, there exists a subset of resources $C\subseteq s_u^\prime$ such that for each $e \in C,~\exists u^\prime \in R_{i+1}$ who used the resource $e$ in the state $s^i$ but not in $s^{i+1}$ and thus contributed to cost incurred by the player $u$ during the phase $i$ in the state $s^i$. Giving, \[\sum_{e \in C}f_e(n_e(s_u^\prime,s_{-u}^i)) > \frac{b_{i+1}}{N^c}.\] Furthermore, the cost of these resources yield a lower bound on the potential at the beginning of the phase:
			\begin{align*}
				\Phi_{R_{i+1}}(s^i) 
				&\ge \sum_{e \in C}f_e(n_e(s_u^\prime,s_{-u}^i))
				\\& >\frac{b_{i+1}}{N^c}. 
			\end{align*}
			The last inequality violates Lemma \ref{C_L_7}. Hence, this contradicts our assumption and thus the claim holds. 
		\end{proof}
		\begin{lemma}
			\label{C_L_11}
			Let u be a player in the block $B_t$, where $t \le \hat{z}-2$. Let $s_u^\prime$ be a strategy different from the one assigned to u by the algorithm at the end of the phase t. Then, for each phase $i\ge t$, it holds that, $c_u(s^i) \le p\cdot c_u(s_u^\prime, s_{-u}^i) + \frac{2p+1}{N^c}\sum_{k=t+1}^{i}b_k.$
		\end{lemma}
				\begin{proof}For the proof we use Lemma \ref{C_L_9} recursively to obtain the first inequality. The second inequality follows from the fact that there was no improving $p$-move to $s'_u$ for the player phase $t$ to $s'_u$. The third inequality follows from Lemma \ref{C_L_10}. The fourth inequality from the definition of $b_i$.
			\begin{align*}
				c_u(s^{i}) 
				&\le c_u(s^t) + \sum_{k=t+1}^{i} \left(\frac{b_k}{N^c} + {N\lambda \Delta b_{k+1}}\right)
				\\&\le p\cdot c_u(s_u^\prime, s_{-u}^{t}) + \sum_{k=t+1}^{i} \left(\frac{b_k}{N^c} + {N\lambda \Delta b_{k+1}}\right)
				\\&\le p\left(c_u(s_u^\prime, s_{-u}^{i}) + \sum_{k=t+1}^{i}\frac{b_{k}}{N^c}\right) + \sum_{k=t+1}^{i} \left(\frac{b_k}{N^c} + {N\lambda \Delta b_{k+1}}\right)	
				\\&\le p\left(c_u(s_u^\prime, s_{-u}^{i}) + \sum_{k=t+1}^{i}\frac{b_{k}}{N^c}\right) + \sum_{k=t+1}^{i} \left(\frac{b_k}{N^c} + \frac{\lambda }{2 N^{2c+1}} b_{k}\right)
				\\&\le p\cdot c_u(s_u^\prime, s_{-u}^{i}) + (2p+1)\sum_{k=t+1}^{i}\frac{b_{k}}{N^c} 
				.
			\end{align*}
		\end{proof}

        As no players' costs and alternatives is significantly influenced by moves in later blocks, they remain in an approximate equilibrium which can be used to finally prove the correctness of the algorithm.
		\begin{lemma}
			\label{C_L_12}
			The state computed by the algorithm is a $p \left(1+\frac{5}{N^c}\right)$-approximate equilibrium.
		\end{lemma}
				\begin{proof}The idea behind the lemma is to show that after a player $u \in B_i$ has made his final best-response move during a phase $i$, he would be in a $p\left(1+\frac{5}{N^c}\right)$-approximate equilibrium at the end of the game i.e., cost incurred by him after the final phase of the game is, $c_u(s^{\hat{z} -1})\le $ $p\left(1+\frac{5}{N^c}\right)c_u(s_u^\prime, s_{-u}^{\hat{z} -1})$, where $s_u^\prime$ is any strategy.
			Now, for players participating in the last phase $\hat{z}-1$ of the game i.e., $u \in (B_{\hat{z} -1} \cup B_{\hat{z}})$, observe that at the end of the phase $\hat{z} -1$, players in block $B_{\hat{z} -1}$ are in a $p$-approximate equilibrium and players in the block $B_{\hat{z}}$ are in a $q$-approximate equilibrium with respect to the modified cost function $f^\prime$. Furthermore, due to Lemma~\ref{L_patleastq} players in the block $B_{\hat{z}}$ are also in a $p$-approximate equilibrium with respect to the original cost function $f$. Therefore, the lemma holds for players in  $u \in (B_{\hat{z} -1} \cup B_{\hat{z}})$.
			
			It is now left to show that for the players $u \in B_t$ where $1\le t \le \hat{z} -2$, after the final phase of the game, \[c_u(s^{\hat{z} -1}) \le p\left(1+\frac{5}{N^c}\right)c_u(s_u^\prime, s_{-u}^{\hat{z} -1}).\]
			
			By the definition of assignment of the players to blocks and the observation that the cost of a player cannot be less than the $\mathit{\ell_u}$, it holds that for any player $u \in B_t$ after the final phase of the game,
			\begin{align*}
				c_u(s_u^\prime, s_{-u}^{\hat{z} -1}) \ge b_{t+1}. \tag{12}
			\end{align*}
			By the definition of $b_i$, we have,
			\begin{align*}
				\sum_{k=t+1}^{\hat{z}}b_k \le 2b_{t+1}. \tag{13}
			\end{align*}
			Using inequalities (12), (13), and Lemma \ref{C_L_11} we get for $p \ge 1$,
			\begin{align*}
				c_u(s^{\hat{z} -1})
				&\le p\cdot c_u(s_u^\prime, s_{-u}^{\hat{z} -1}) + (2p+1)\sum_{k=t+1}^{\hat{z} -1}\frac{b_{k}}{N^c}
				\\&\le p\cdot c_u(s_u^\prime, s_{-u}^{\hat{z} -1}) + \frac{2(2p+1)}{N^c}c_u(s_u^\prime, s_{-u}^{\hat{z} -1})
				\\&\le p\left(1+\frac{5}{N^c}\right)c_u(s_u^\prime, s_{-u}^{\hat{z} -1}).
			\end{align*}
		\end{proof}
		
		Lemmas \ref{C_L_8} and \ref{C_L_12} conclude the proof of the Theorem~\ref{C_T_1} to show that for \[q = \left(1 + \frac{1}{N^c}\right)\] and \[p = \left(\frac{1}{\theta\left(q\right)} - \frac{1+2\lambda+q}{N^c}\right)^{-1}\]
		the algorithm computes a $\alpha$-approximate pure Nash equilibrium in polynomial time, where

        \begin{align*}
        \alpha
			&\le \left(\frac{1}{\theta(q)}-\frac{1+2\lambda+q}{N^c}\right)^{-1}\left(1+\frac{5}{N^c}\right)
			\\&=\frac{1}{\left( \frac{1 - \frac{N\lambda}{N^c+1}}{\lambda} - \frac{1+2\lambda + q}{N^c}\right)}\left(1+\frac{5}{N^c}\right)
			\\&=\frac{\lambda}{\left( 1 - \frac{N\lambda}{N^c+1} - \frac{\lambda(1+2\lambda + q)}{N^c}\right)}\left(1+\frac{5}{N^c}\right)
			\\&\le\frac{\lambda}{\left( 1 - \frac{N\lambda}{N^c+1} - \frac{\lambda(2\lambda+3)}{N^c}\right)}\left(1+\frac{5}{N^c}\right),
        \end{align*}
	choosing $c = 10 \log\left(\frac{\lambda}{\epsilon}\right)$ we can easily bound
	 \begin{align*}
         \frac{\lambda}{\left( 1 - \frac{N\lambda}{N^c+1} - \frac{\lambda(2\lambda+3)}{N^c}\right)}\left(1+\frac{5}{N^c}\right)
			&\le\frac{\lambda}{\left( 1 - \frac{\epsilon}{5} - \frac{\epsilon}{5}\right)}\left(1+\frac{\epsilon}{5}\right) \\ 
		&\le\ \lambda (1 + \epsilon).
		\end{align*}
		 This concludes the proof of correctness for Algorithm~\ref{algo}.
		
\end{proof}

\subsection{Improving the Approximation Factor}
In Section~\ref{analysis_sec} we proved that given a set of cost functions $f^\prime$ satisfying the strong smoothness condition of Lemma~\ref{lem:subset} for some $\lambda>0$, Algorithm~\ref{algo} computes a $\lambda(1+\epsilon)$-approximate pure Nash equilibrium. We now study how one could optimally choose the modified cost functions $f^\prime$ such that value of $\lambda$ is minimized.
 Observe that from Lemma~\ref{lem:subset}, for any resource $e \in E$, the modified cost functions $f_e'$ can be computed using a linear program.
	That is, for a objective function $\phi_e$ and a bound on the number of players $N$, finding functions $f'_e$ that minimize $\lambda$, can be easily solved by the following linear program LP$_\phi$ with the variables $f'_e(1),\ldots,f'_e(N)$, and $\lambda_e$.
    \begin{align*} 
		\min \lambda_e 
		\\\lambda_e \cdot \sum_{i=z+1}^{m+z} f_e(i) - m f'_e(n+z+1) + n f'_e(n+z) &\ge \sum_{i=z+1}^{n+z} f_e(i) &\forall (n+z), m \in [0,N]
		\\f'_e(n) &\ge 0   &\forall n \in [0,N+1]
	\end{align*}

    Choosing the output of the linear program LP$_\phi$ as the modified cost functions $f^\prime$ and setting value of $\lambda:=\max_{e \in E}\lambda_e$ in Algorithm~\ref{algo}, results in $\lambda(1+\epsilon)$-approximate pure Nash equilibrium.
    
	Observe that LP$_\phi$ is compact, i.e., the number of constraints and variables are polynomially bounded in the number of players. Hence, we state the following theorem. 
	\begin{theorem}
		Optimal resource cost functions $f'_e$ for objective functions $\phi_e$ can be computed in polynomial time.
	\end{theorem}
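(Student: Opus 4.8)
The plan is to recognize that, for a fixed resource $e$, selecting cost values $f'_e(1),\dots,f'_e(N)$ that minimize the smoothness parameter $\lambda_e$ is literally a linear optimization problem of polynomial size, and then to invoke the fact that linear programs with rational data are solvable in polynomial time.

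First I would unfold the strong smoothness condition of Lemma~\ref{lem:subset} for a single resource $e$. Writing $n$ for $n_e^F(s)$, $z$ for $n_e^{\mathcal N\setminus F}(s)$ (so that $n_e(s)=n+z$ and $\phi_e^F(s)=\sum_{i=z+1}^{n+z}f_e(i)$), $m$ for $n_e^F(s^*)$, and using that the frozen players keep the same count on $e$ in $s$ and $s^*$, the hypothesis of Lemma~\ref{lem:subset} becomes
\[
\lambda_e\cdot\sum_{i=z+1}^{m+z}f_e(i)-m\,f'_e(n+z+1)+n\,f'_e(n+z)\ \ge\ \sum_{i=z+1}^{n+z}f_e(i).
\]
Since there are only $N$ players, it is enough to range $n,z,m$ over $\{0,1,\dots,N\}$ subject to $n+z\le N$ and $m+z\le N$, which produces at most $O(N^3)$ inequalities. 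Each of them is linear in the $N+2$ unknowns $f'_e(0),\dots,f'_e(N+1)$ together with $\lambda_e$, and all coefficients are partial sums of the given values $f_e(1),\dots,f_e(N)$. Adjoining the nonnegativity constraints $f'_e(n)\ge 0$ and the objective $\min\lambda_e$ yields exactly LP$_\phi$, which therefore has $O(N)$ variables, polynomially many constraints, and data whose encoding length is polynomial in that of the description of $f_e$; i.e., LP$_\phi$ is compact.

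It remains to check that LP$_\phi$ has a finite optimum that is attained. The feasible region is nonempty: taking $f'_e:=\bar\lambda f_e$ is feasible whenever $\bar\lambda$ is at least the stretch of $f_e$, and such a finite $\bar\lambda$ exists for the cost functions under consideration (e.g.\ bounded-degree polynomials, as in Caragiannis et al.~\cite{Caragiannis2011}). The objective is bounded below, since every feasible $f'_e$ is strongly $(\lambda_e,0)$-smooth by Lemma~\ref{lem:subset}, whence by Lemma~\ref{lemma:fprime} we have $f_e(n)\le f'_e(n)\le\lambda_e f_e(n)$ and thus $\lambda_e\ge 1$. A feasible, bounded-below linear program attains its optimum, and by the ellipsoid method (or an interior-point algorithm) an optimal solution of an LP with rational coefficients can be computed in time polynomial in the number of variables, the number of constraints, and the bit-length of the data. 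Running this for every resource $e\in E$ and setting $\lambda:=\max_{e\in E}\lambda_e$ produces modified cost functions $f'=(f'_e)_{e\in E}$ that, by Lemma~\ref{lem:subset}, are strongly $(\lambda,0)$-smooth with respect to $\phi$ and are resource-wise optimal, and the entire computation runs in polynomial time.

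The only genuine subtlety, and the point I would be most careful about, is the counting that makes LP$_\phi$ compact: one has to verify that ranging the subset-smoothness inequality merely over integer triples $(n,z,m)$ bounded by $N$ — rather than over all subsets $F\subseteq\mathcal N$ and all profiles $s,s^*$ — loses nothing, which is exactly the content of Lemma~\ref{lem:subset}, and that the resulting coefficients (partial sums of the $f_e(i)$) have polynomially bounded encoding size so that a polynomial-time LP solver is applicable.
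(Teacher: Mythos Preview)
Your proposal is correct and follows essentially the same approach as the paper: the paper's entire argument is the single observation that LP$_\phi$ is compact (polynomially many variables and constraints in $N$), from which polynomial-time solvability is immediate. You supply more detail than the paper does---spelling out the $O(N^3)$ constraint count, arguing feasibility and boundedness, and flagging that Lemma~\ref{lem:subset} is what lets one quantify over integer triples $(n,z,m)$ rather than over all subsets and profiles---but the route is the same.
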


\noindent \textbf{Linear and Polynomial Cost Functions}

Observe that Theorem~\ref{C_T_1} holds for all congestion games with arbitrary non-decreasing cost functions. We now turn to the important class of polynomial cost functions with non-negative coefficients. 
We can use a standard trick to simplify the analysis for resources with polynomial cost functions of the form $f_e(x) = \sum_{i=0}^d a_i x^d$ by replacing such resources by $d+1$ resources with cost functions $a_0, a_1 x,\ldots, a_d x^d$ and adjust the strategy sets accordingly. Hence, by an additional scaling argument it suffices to only consider cost functions of the form $f_e(x)=x^i$ and $i \in \{1,\dots,d\}$.

Furthermore, we can show that for polynomials of small degree, it is sufficient to restrict the attention to the first $K=150$  values of the cost functions. 
Hence, we only need to solve a linear program of constant size. The following lemma states that for the larger values of $n$, appropriate values of $\lambda_d$, and $\nu$, we can easily obtain $(\lambda_d, 0)$-smoothness by choosing $f'(n)=\nu n^d$. 

\begin{lemma}
\label{lem:largeNPotential}
For $d \le 5$ and $n\ge  150$, the function $f'(n)= \nu n^d$ with $\nu = \sqrt[d+1]{\lambda_d}$ is strong $(\lambda_d, 0)$-smooth with respect to the potential function $\phi$ for an appropriate $\lambda_d$. 
\end{lemma}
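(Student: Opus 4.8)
The plan is to verify the strong $(\lambda_d,0)$-smoothness inequality from Lemma~\ref{lem:subset} directly for the candidate cost function $f'(n)=\nu n^d$ with $\nu=\sqrt[d+1]{\lambda_d}$, in the regime where all the load arguments involved are at least $150$. Recall that strong smoothness (via Lemma~\ref{lem:subset}, instantiated on a single resource with cost $f(x)=x^d$ and potential $\phi_e(x)=\sum_{i=1}^x i^d$) amounts to showing, for all integers $n,m\ge 0$ and all $z\ge 0$ with $n+z \ge 150$,
\[
\lambda_d \sum_{i=z+1}^{m+z} i^d \;-\; m\,\nu(n+z+1)^d \;+\; n\,\nu(n+z)^d \;\ge\; \sum_{i=z+1}^{n+z} i^d.
\]
First I would reduce the number of free parameters: since $f'$ is a pure monomial and the left side is monotone in the obvious way, the worst case over $z$ should occur at $z=0$ once $n\ge 150$, and the genuinely dangerous case is when $z$ is small; so I would treat the ``$z$ large'' subcase by a crude continuous bound (replacing sums $\sum_{i=z+1}^{m+z} i^d$ by integrals $\int$, incurring $O(1)$-controlled error terms that are dominated because every argument is $\ge 150$) and reduce to a one- or two-variable inequality.

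Next I would handle the reduced inequality. After dividing through by an appropriate power, set $x=n/(n+z)\in(0,1]$ (or work with the ratio $m/(n+z)$ as the second variable) and use the integral estimates $\sum_{i=z+1}^{n+z} i^d \le \int_{z}^{n+z} t^d\,dt + (n+z)^d = \frac{(n+z)^{d+1}-z^{d+1}}{d+1} + (n+z)^d$ and the matching lower bound $\sum_{i=z+1}^{m+z} i^d \ge \frac{(m+z)^{d+1}-z^{d+1}}{d+1}$, valid with error $O((n+z)^d)$ that is a $\le 1/150$ fraction of the leading $(n+z)^{d+1}$ term. This turns the discrete inequality into a smooth inequality in the real variables that, after the substitution, reads roughly
\[
\lambda_d\,\frac{(m+z)^{d+1}-z^{d+1}}{d+1} - \nu\, m\,(n+z)^d + \nu\, n\,(n+z)^d \;\ge\; \frac{(n+z)^{d+1}-z^{d+1}}{d+1},
\]
up to the controlled error, and I would verify this by minimizing the left-hand-side minus right-hand-side over $m\ge 0$ (the minimizer is at $(m+z)^d = \nu(n+z)^d/\lambda_d$, i.e. $m+z = \nu^{1/d}\lambda_d^{-1/d}(n+z)$, which is where the choice $\nu = \lambda_d^{1/(d+1)}$ is engineered to make things balance) and then checking the residual inequality in the single variable $z/(n+z)\in[0,1)$ together with the constraint $d\le 5$. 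For each $d\in\{1,\dots,5\}$ this residual is a fixed polynomial inequality in one bounded real variable, verifiable by elementary calculus or by a finite case check, with $\lambda_d$ being precisely the constant that makes it hold.

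The main obstacle I anticipate is bookkeeping the error terms from the sum-to-integral passage so that they are genuinely absorbed uniformly for all $n+z\ge 150$: one must show that the $O((n+z)^d)$ slack never swamps the $\Theta((n+z)^{d+1})$ main terms, which is exactly why the threshold $150$ (rather than some smaller constant) is needed, and where the interplay between $d\le 5$ and the constant enters. A secondary subtlety is the boundary/degenerate cases $m=0$, $n=0$, or $z=0$ — these must be checked separately since the integral approximations are tightest (or the minimizing $m$ falls outside the feasible range $m\ge 0$) there; in particular the case $m=0$ forces $n\,\nu(n+z)^d \ge \sum_{i=z+1}^{n+z} i^d$, which is true with room to spare because $\nu\ge 1$ and $(n+z)^d \ge i^d$ for every $i$ in the range. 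Once the $z=0$, full-range case is settled, monotonicity in $z$ (increasing $z$ only helps, as it adds equal-sized positive and negative large terms whose difference is favorable) closes out the remaining cases, completing the verification that $f'(n)=\nu n^d$ satisfies strong $(\lambda_d,0)$-smoothness for $n\ge 150$ and $d\le 5$.
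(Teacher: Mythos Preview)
Your outline is essentially the paper's argument: bound the two sums by integrals, minimize the resulting expression over $m$ analytically (the minimizer is $(m+z)^d=\nu(x+1)^d/\lambda_d$ with $x:=n+z$, and this is precisely where the choice $\nu=\lambda_d^{1/(d+1)}$ is engineered to balance the leading $(x+1)^{d+1}$ and $x^{d+1}$ terms), and then handle the remaining $z$-dependence for $x\ge 150$. The paper does that last step by a second explicit optimization in $z$, after which one is left with the elementary inequality $\frac{(x+1)^{d+1}}{x^{d+1}-(x+1)^d+x^d}\le \lambda_d^{1/(d+1)}$, which holds for $x\ge 150$ and $d\le 5$.

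The one genuine gap in your plan is the handling of $z$. You assert both that ``the worst case over $z$ should occur at $z=0$'' and that ``increasing $z$ only helps''; neither is justified, and in fact after the $m$-minimization the $z$-dependent part of the lower bound is
\[
-(\lambda_d-1)\,\frac{(z+1)^{d+1}}{d+1}\;+\;z\,\nu\big((x+1)^d-x^d\big),
\]
which is concave in $z$ with an interior critical point --- so it is not monotone, and the minimum over $z\in[0,x]$ lands at an endpoint, not at $z=0$ in general. You therefore cannot close the argument by a monotonicity shortcut; you must actually optimize over $z$ (your own alternative of checking the residual as a one-variable inequality in $z/(n+z)$ is the right move, and is what the paper does). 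Your instinct to treat the endpoints $n=0$ and $m=0$ separately is well placed: at $n=0$ the integral lower bound on $\sum_{i=z+1}^{m+z}i^d$ is too crude, but the original discrete inequality there reduces to $\lambda_d\sum_{i=z+1}^{m+z}i^d\ge m\nu(z+1)^d$, which follows directly from $\lambda_d\ge\nu$. One minor slip: in your displayed continuous inequality the coefficient of $m$ should be $\nu(n+z+1)^d$, not $\nu(n+z)^d$; the discrepancy is lower order and is absorbed by the $x\ge 150$ slack, but it should be tracked along with the other $O(x^d)$ error terms.
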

   \begin{proof}For each degree $d$ and its associated value of $\lambda_d = \rho_d$ (Table~\ref{table:apxPNE}), we need to show that there exist a $\nu$ such that for the smoothness condition in Lemma~\ref{lem:subset}, the following holds for all $x\ge 150$,
\[\lambda_d \sum_{i=z+1}^{m+z}i^d-m \cdot \nu \cdot (x+1)^d +(x-z) \cdot \nu \cdot x^d - \sum_{i=z+1}^{x}i^d \ge 0. \]

For all $\lambda_d > 0$,
\begin{align*}
    &\lambda_d \sum_{i=z+1}^{m+z}i^d-m \cdot \nu \cdot (x+1)^d +(x-z) \cdot \nu \cdot x^d - \sum_{i=z+1}^{x}i^d
    \\&\ge\lambda_d \sum_{i=z+1}^{m+z}i^d-m \cdot \nu \cdot (x+1)^d +(x-z) \cdot \nu \cdot x^d - \int_{z+1}^{x+1}t^ddt
    \\&\ge \lambda_d \sum_{i=z+1}^{m+z}i^d-m \cdot \nu \cdot (x+1)^d +(x-z) \cdot \nu \cdot x^d - \frac{(x+1)^{d+1}}{d+1} + \frac{(z+1)^{d+1}}{d+1}
    \\&= \lambda_d \sum_{i=z+1}^{m+z}i^d-m \cdot \nu \cdot (x+1)^d + \nu \cdot x^{d+1} -  \nu \cdot z\cdot x^d - \frac{(x+1)^{d+1}}{d+1} + \frac{(z+1)^{d+1}}{d+1}
    \\&\ge \lambda_d \int_{z+1}^{m+z}t^ddt-m \cdot \nu \cdot (x+1)^d + \nu \cdot x^{d+1} -  \nu \cdot z\cdot x^d - \frac{(x+1)^{d+1}}{d+1} + \frac{(z+1)^{d+1}}{d+1}
    \\&= \lambda_d \frac{(m+z)^{d+1}}{d+1} - \lambda_d\frac{(z+1)^{d+1}}{d+1} -m \cdot \nu \cdot (x+1)^d + \nu \cdot x^{d+1} -  \nu \cdot z\cdot x^d - \frac{(x+1)^{d+1}}{d+1} 
    \\&\qquad+ \frac{(z+1)^{d+1}}{d+1}
\end{align*}
The above expression is minimized for,
\begin{align*}
    m = \sqrt[d]{\frac{\nu \cdot (x+1)^d}{\lambda_d}} - z.
\end{align*}
Substituting for $m$ we obtain,
\begin{align*}
    &\ge \lambda_d \frac{\left(\sqrt[d]{\frac{\nu \cdot (x+1)^d}{\lambda_d}}\right)^{d+1}}{d+1} - (\lambda_d-1)\frac{(z+1)^{d+1}}{d+1} - \left(\sqrt[d]{\frac{\nu \cdot (x+1)^d}{\lambda_d}} - z\right) \cdot \nu \cdot (x+1)^d   
    \\&\qquad + \nu \cdot x^{d+1} -\nu \cdot z\cdot x^d - \frac{(x+1)^{d+1}}{d+1} 
    \\&= \lambda_d \frac{\left(\sqrt[d]{\frac{\nu \cdot (x+1)^d}{\lambda_d}}\right)^{d+1}}{d+1} - (\lambda_d-1)\frac{(z+1)^{d+1}}{d+1} - \sqrt[d]{\frac{\nu \cdot (x+1)^d}{\lambda_d}} \cdot \nu \cdot (x+1)^d + z \cdot \nu \cdot (x+1)^d  
    \\&\qquad+ \nu \cdot x^{d+1} -  \nu \cdot z\cdot x^d - \frac{(x+1)^{d+1}}{d+1}
    \\&= (x+1)^{d+1}\left(\left( \frac{1}{d+1} - 1\right) \frac{\nu^{\frac{d+1}{d}}  }{\sqrt[d]{\lambda_d}} - \frac{1}{d+1}\right) - \left((\lambda_d-1)\frac{(z+1)^{d+1}}{d+1}  - z \cdot \nu \left( (x+1)^d  - x^d\right)\right)
    \\&\qquad + \nu \cdot x^{d+1}.
\end{align*}
The above expression is minimized for, \[z = \sqrt[d]{\frac{\nu((x+1)^d - x^d)}{\lambda_d - 1}} - 1.\]
Substituting for $z$ we get,
\begin{align*}
    &\ge (x+1)^{d+1}\left(\left( \frac{1}{d+1} - 1\right) \frac{\nu^{\frac{d+1}{d}}  }{\sqrt[d]{\lambda_d}} - \frac{1}{d+1}\right)  - \left( \frac{1}{d+1} - 1\right)\frac{\nu^\frac{d+1}{d}}{\sqrt[d]{\lambda_d - 1}} \left((x+1)^d - x^d\right)^\frac{d+1}{d} 
    \\&\qquad+ \nu \cdot x^{d+1} - \nu \left((x+1)^d - x^d\right)
\end{align*}
For any $\nu \ge 0$ the above expression is,
\begin{align*}
    &\ge (x+1)^{d+1}\left(\left( \frac{1}{d+1} - 1\right) \frac{\nu^{\frac{d+1}{d}}  }{\sqrt[d]{\lambda_d}} - \frac{1}{d+1}\right) + \nu \left( x^{d+1} - (x+1)^d + x^d\right)
\end{align*}
Now, what is left to show is that there exists a $\nu \ge 0$ such that, 
\begin{align*}
    (x+1)^{d+1}\left(\left( \frac{1}{d+1} - 1\right) \frac{\nu^{\frac{d+1}{d}}  }{\sqrt[d]{\lambda_d}} - \frac{1}{d+1}\right) + \nu \left( x^{d+1} - (x+1)^d + x^d\right) \ge 0
\end{align*}
equivalent to,
\begin{align*}
    \nu \left( x^{d+1} - (x+1)^d + x^d\right) \ge (x+1)^{d+1}\left(\left( 1- \frac{1}{d+1} \right) \frac{\nu^{\frac{d+1}{d}}  }{\sqrt[d]{\lambda_d}} + \frac{1}{d+1}\right)\tag{14}
\end{align*}
For $\nu = \sqrt[d+1]{\lambda_d}$ and $\forall x \in \mathbb{N}$ such that, \[\frac{(x+1)^{d+1}}{\left( x^{d+1} - (x+1)^d + x^d\right)} \le \sqrt[d+1]{\lambda_d},\] the inequality (14) holds. From the fact that, \[\lim_{x \to \infty}\frac{(x+1)^{d+1}}{\left( x^{d+1} - (x+1)^d + x^d\right)} = 1,\]

and choice of $\lambda_d = \rho_d$  (Table~\ref{table:apxPNE}), gives for $0< d \le 5$, that $x \ge 150$ is sufficient.
\end{proof}

We further note, that for a given $\lambda_d > 0$, for each $n$ and $z$, we only need to consider a limited range for $m$.
      \begin{lemma}
 \label{lem:largeMStrech}
   For fixed $n,z$, if $\lambda_d \cdot \sum_{i=z+1}^{m+z} i^{d} - m f'(n+z+1) + n f'(n+z) \ge \sum_{i=z+1}^{n+z} i^{d} $ is true $\forall m \le (n+z+1)^2(d+1)$, it also holds $\forall m > (n+z+1)^2 (d+1)$.

 \end{lemma}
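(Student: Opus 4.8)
The plan is to treat the inequality, for fixed $n$ and $z$, as a statement about the sign of a single function of the integer variable $m$, namely
\[ g(m) := \lambda_d \sum_{i=z+1}^{m+z} i^{d} - m\, f'(n+z+1) + n\, f'(n+z) - \sum_{i=z+1}^{n+z} i^{d}, \]
so that the claim is exactly $g(m)\ge 0$. The point is that $g$ is \emph{discretely convex}: its forward difference is $g(m+1)-g(m) = \lambda_d (m+z+1)^{d} - f'(n+z+1)$, and since $d\ge 1$ this is non-decreasing in $m$. Hence as soon as the forward difference becomes non-negative at some threshold $m_0$, the sequence $g(m_0),g(m_0+1),\dots$ is non-decreasing, and it suffices to (i) locate such an $m_0$ and (ii) check $m_0\le (n+z+1)^2(d+1)$; then $g(m)\ge g\big((n+z+1)^2(d+1)\big)\ge 0$ for every larger $m$, the last inequality being the hypothesis.

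For step (i) I would bound $f'(n+z+1)$ from above by $\lambda_d (n+z+1)^{d}$. This is precisely the smoothness inequality of Lemma~\ref{lem:subset} instantiated with $n=0$, $m=1$, and the load parameter set to $n+z$ (equivalently, it is Lemma~\ref{lemma:fprime} for the cost function $f(x)=x^{d}$). Crucially, this is one of the \emph{retained} constraints, since its $m$-value is $1\le (n+z+1)^{2}(d+1)$, so it is legitimate to invoke it inside the LP reduction. With this bound,
\[ g(m+1)-g(m) = \lambda_d (m+z+1)^{d} - f'(n+z+1) \;\ge\; \lambda_d\big((m+z+1)^{d} - (n+z+1)^{d}\big)\;\ge\;0 \]
as soon as $m\ge n$, so we may take $m_0 = n$.

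Finally, since $(n+z+1)^{2}(d+1)\ge n$ for all non-negative integers $n,z$ and $d\ge 1$, step (ii) is immediate, and the argument closes: $g$ is non-decreasing on the integers $m\ge (n+z+1)^{2}(d+1)$ and is non-negative at the left endpoint by assumption, hence non-negative for all $m>(n+z+1)^{2}(d+1)$. (In fact the same reasoning already works with the much smaller threshold $m_0=n$; the quadratic bound in the statement is just a convenient over-estimate.) The only mild subtlety — and the step I would be most careful to state correctly — is ensuring that the auxiliary bound $f'(n+z+1)\le\lambda_d(n+z+1)^{d}$ is itself drawn from a constraint that survives in the reduced LP; once that is pinned down, everything else is a one-line convexity-plus-monotonicity argument.
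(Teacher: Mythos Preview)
Your proof is correct and takes a genuinely different route from the paper. The paper first derives a crude recursive upper bound $f'(n+z+1)\le(\lambda_d-1)(n+z)^{d+1}+\lambda_d$ (by chaining the constraints with $m=n$ starting from $f'(1)\le\lambda_d$), substitutes this directly into the expression, drops the positive term $n\,f'(n+z)$, and then shows the resulting lower bound $\lambda_d\sum_{i=z+1}^{m+z}i^d-\lambda_d m\big((n+z)^{d+1}+1\big)$ is non-negative once $m\ge(d+1)(n+z+1)^2$. Your argument instead uses the much tighter bound $f'(n+z+1)\le\lambda_d(n+z+1)^d$ (the $n'=0$, $m'=1$ constraint, i.e.\ Lemma~\ref{lemma:fprime}) together with the observation that the forward difference $g(m+1)-g(m)=\lambda_d(m+z+1)^d-f'(n+z+1)$ is non-decreasing, so $g$ is discretely convex and non-decreasing from $m=n$ onward; the hypothesis at the quadratic threshold then finishes. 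What this buys you is a shorter argument that in fact establishes the stronger statement that the threshold $m_0=n$ already suffices, whereas the paper's weaker bound on $f'$ really does need the full quadratic threshold to close. One point worth noting: both proofs, yours and the paper's, invoke constraints at \emph{other} $(n,z)$ pairs than the one fixed in the lemma (you use $(0,n+z)$, the paper uses $(0,0)$ and the chain $(i,0)$ for $i\le n+z$); the lemma is really a statement about the reduced LP where all such small-$m$ constraints are retained, and you were right to flag this explicitly.
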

  \begin{proof}For any given $n,z \in \mathbb{N}$, and $\lambda_d > 0$, we show for all $m \ge (n+z+1)^2(d+1)$ that, \[\lambda_d \sum_{i=z+1}^{m+z} i^d - m f'(n+z+1) + n f'(n+z) - \sum_{i=z+1}^{n+z} i^d \ge 0 .\]

  We first upper bound the feasible values for $f'$. From the smoothness condition in Lemma~\ref{lem:subset}, observe that for $n=z=0$ and $m=1$ the inequality implies that $f'(1) \le \lambda_d$. 
  Furthermore, by the choice of $m=n$, we get \begin{align*}
    f'(n+z+1)&\le (\lambda_d-1)\sum_{i=1}^{n+z} \left(\frac{1}{i}\sum_{j=1}^i j^d \right) + \lambda_d
    \\&\le (\lambda_d-1)\sum_{j=1}^{n+z} j^d + \lambda_d
    \\&\le (\lambda_d-1)(n+z)^{d+1} + \lambda_d
 \end{align*} 
    This gives,
    \begin{align*}
     &\lambda_d \sum_{i=z+1}^{m+z} i^d - m f'(n+z+1) + n f'(n+z) - \sum_{i=z+1}^{n+z} i^d
     \\&\ge \lambda_d \sum_{i=z+1}^{m+z} i^d - m \left( (\lambda_d-1)(n+z)^{d+1} + \lambda_d\right) + n f'(n+z) - \sum_{i=z+1}^{n+z} i^d
     \\&\ge \lambda_d \sum_{i=z+1}^{m+z} i^d - m \left( (\lambda_d-1)(n+z)^{d+1} + \lambda_d\right) - \sum_{i=z+1}^{n+z} i^d
     \\&\ge \lambda_d \sum_{i=z+1}^{m+z} i^d - \lambda_d m(n+z)^{d+1} + m(n+z)^{d+1} - m\lambda_d - (n+z)^{d+1}
     \\&= \lambda_d \sum_{i=z+1}^{m+z} i^d - \lambda_d m(n+z)^{d+1} + (m-1)(n+z)^{d+1} - m\lambda_d
     \\&\ge \lambda_d \sum_{i=z+1}^{m+z} i^d - \lambda_d m(n+z)^{d+1} - m\lambda_d
     \end{align*} 
      Now we can bound for $m\ge (d+1)(n+z+1)^2$.
     \begin{align*}
     &\lambda_d \sum_{i=z+1}^{m+z} i^d - \lambda_d m(n+z)^{d+1} - m\lambda_d
     \\&=\lambda_d \sum_{i=z+1}^{m+z} i^d - \lambda_d m((n+z)^{d+1} +1)
     \\&\ge\lambda_d \frac{m(m+z)^d}{d+1} - \lambda_d m((n+z)^{d+1} +1)
     \\&\ge\lambda_d \frac{m^{d+1}}{d+1} - \lambda_d m((n+z)^{d+1} +1)
     \\&\ge 0.
 \end{align*}
 \end{proof}

     By Lemma \ref{lem:largeNPotential} and \ref{lem:largeMStrech} it remains to solve the following linear program LP$_\phi^K$ to obtain our results $\rho_d$  as listed in Table~\ref{table:apxPNE} for $d\le 5$.
         \begin{align*}
		\min \rho_d
		\\\rho_d \cdot \sum_{i=z+1}^{m+z} i^d - m f'(n+z+1) + n f'(n+z) &\ge \sum_{i=z+1}^{n+z} i^d &\forall (n+z)\in [0,K), 
		\\&&\forall m\in [0,(K+1)^2(d+1)]
		\\f'(K) &\le \nu K^d&  
		\\f'(n) &\ge 0   &\forall n \in [0,K]
	\end{align*}
\begin{corollary}
For every congestion game with polynomial cost functions of degree $d \le 5$ and for every constant $\epsilon>0$, the algorithm computes a $(\rho_d+\epsilon)$-approximate pure Nash equilibrium in polynomial time.
\end{corollary}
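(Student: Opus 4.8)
The plan is to instantiate Theorem~\ref{C_T_1} with the cost functions produced by the finite linear program LP$_\phi^K$ and to verify that this choice is feasible for the strong smoothness condition of Lemma~\ref{lem:subset}. First I would invoke the reduction described above: replacing each degree-$d_e$ polynomial resource by the monomials $a_0, a_1 x, \ldots, a_{d_e} x^{d_e}$ and rescaling, it suffices to treat the case where every resource $e$ has cost $f_e(x) = x^{d_e}$ with $d_e \le d$. For each degree $i \in \{1,\dots,d\}$ I would solve LP$_\phi^K$ with $K = 150$, obtaining the optimal value $\rho_i$ together with values $f'_{(i)}(1),\dots,f'_{(i)}(K)$, and extend this to all loads by setting $f'_{(i)}(n) = \nu_i n^i$ for $n > K$, where $\nu_i = \sqrt[i+1]{\rho_i}$. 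A resource of degree $i$ then receives the modified cost function $f'_{(i)}$, and I set $\lambda := \max_{e} \rho_{d_e} \le \rho_d$, using that $(\rho_i)$ is increasing in Table~\ref{table:apxPNE}.

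The central verification is that this $(f'_e)_{e\in E}$ is strongly $(\lambda,0)$-smooth with respect to $\phi$, i.e.\ satisfies the per-resource inequality of Lemma~\ref{lem:subset} for all nonnegative integers $n,z,m$. I would split on the total load $n+z$ at the resource. If $n+z \ge K$, then only the large-load values of $f'_{(i)}$, which equal $\nu_i(\cdot)^i$, appear in the inequality, and Lemma~\ref{lem:largeNPotential} (applied with $x = n+z$) gives the claim for all $m$. If $n+z < K$, then for $m \le (n+z+1)^2(d+1) \le (K+1)^2(d+1)$ the inequality is exactly a constraint of LP$_\phi^K$ (whose $\sum_{i'=z+1}^{m+z} f_e(i')$ term involves only the original monomial), and Lemma~\ref{lem:largeMStrech} then extends it to all larger $m$; the linking constraint $f'_{(i)}(K) \le \nu_i K^i$ in LP$_\phi^K$ is precisely what makes the two regimes consistent at the boundary $n+z+1 = K$.

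With feasibility in hand, Theorem~\ref{C_T_1} says that Algorithm~\ref{algo}, run with these modified costs, computes a $\lambda(1+\epsilon')$-approximate pure Nash equilibrium in a number of best-response moves polynomial in $N$, $\Delta = \max_e f_e(N)/f_e(1)$, $\lambda$, and $1/\epsilon'$. After the monomial reduction $\Delta$ equals $N^{d}$ in the worst case, hence polynomial in $N$; $\lambda \le \rho_d$ is an absolute constant; and choosing $\epsilon' = \epsilon/\rho_d$ yields $\lambda(1+\epsilon') \le \rho_d + \epsilon$, with the running time still polynomial in $N$ and $1/\epsilon$. This gives the corollary.

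I expect the main obstacle to be the second paragraph: establishing that the constant-size program LP$_\phi^K$ genuinely encodes the entire infinite family of smoothness inequalities. The delicate points are checking that the two truncations — in the congestion parameter via Lemma~\ref{lem:largeNPotential} and in the deviation parameter $m$ via Lemma~\ref{lem:largeMStrech} — cover every case without a gap, that the tail choice $\nu_i = \sqrt[i+1]{\rho_i}$ is compatible with the numerical values $\rho_i$ from Table~\ref{table:apxPNE} (this is where $K = 150$ is used, since one needs $(x+1)^{i+1}/(x^{i+1} - (x+1)^i + x^i) \le \sqrt[i+1]{\rho_i}$ for all $x \ge K$), and that the finite LP indeed attains optimum $\rho_d$ — a finite but genuinely computational check for each $d \le 5$.
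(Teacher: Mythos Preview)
Your proposal is correct and follows essentially the same approach as the paper: reduce to monomials, solve the constant-size LP$_\phi^K$ for each degree, glue the LP values to the tail $f'(n)=\nu n^d$ via Lemmas~\ref{lem:largeNPotential} and~\ref{lem:largeMStrech}, and then invoke Theorem~\ref{C_T_1} with $\Delta = N^d$ polynomial. The paper states the corollary in one line after presenting LP$_\phi^K$; you have simply spelled out the verification of strong smoothness and the choice of $\epsilon'$ that the paper leaves implicit, and you correctly flag the boundary-gluing and the numerical check of $\rho_d$ as the only nontrivial points.
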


\section{Load Dependent Universal Taxes}
\label{section_extension}

  We now look at an extension of Lemma~\ref{def:usmooth} for computing load dependent universal taxes in congestion games. We give a rather simple approach to locally (on resource) compute load dependent {\em universal} taxes. Table~\ref{table:locality} lists the improved PoA bounds under refundable taxation using our technique for congestion games with resource cost functions that are bounded degree polynomials of maximum degree $d$. 
By the smoothness argument (Theorem~\ref{theo:extension}\footnote{ See appendix~\ref{missing_proofs_apxalgo}.}, \cite{Roughgarden:2015:IRP:2841330.2806883}) the new bounds immediately extends to mixed, (coarse) correlated equilibria, and outcome generated by no-regret sequences. Moreover, since the linear program that computes the cost or tax function does only depend on the original cost function of that resource, the computed taxes are robust against perturbations of the instance such as adding or removing of resources or players.

    We seek to compute universal load dependent taxes that minimize the PoA under refundable taxation. We consider the following optimization problem.    
	For an objective function $h(s)= \sum_{e \in E} n_e(s)\cdot f_e(n_e(s))$, find functions $f'_e$ that satisfies Lemma~\ref{def:usmooth} minimizing $\lambda$. For a resource objective function $h_e(n_e(s)) = n_e(s)\cdot f_e(n_e(s))$ and a bound on the number of players $N$, this can be easily solved by the following linear program LP$_{\textsc{sc}}$ with the variables $f'_e(1),\ldots,f'_e(N)$, and $\lambda_e$. 
    \begin{align*}
		\min \lambda_e
		\\\lambda_e \cdot h_e(m) - m f'_e(n+1) + n f'_e(n) &\ge h_e(n) &\text{for all } n\in [0,N], m\in [0,N]
		\\f'_e(n) &\ge 0   &\text{for all } n \in [0,N]
	\end{align*}
  	Observe that, we can solve LP$_{\textsc{sc}}$ locally for each resource with cost function $f_e$. For the LP solution $\lambda_e$ and $f'_e(n)$, define the tax function as $t_e(n) := f'_e(n) - f_e(n)$. The resulting price of anarchy under taxation is then $\lambda := \max_{e \in E} \lambda_e$. 
  	
	For any (distributed) local search algorithm (such as Bjelde et al.~\cite{Bjelde:2017:BAA:3087556.3087597}) that seeks to minimize the social cost $c(s) =  \sum_{e \in E} n_e(s) f_e(n_e(s))$, we define $\zeta_\textsc{sc}(s) := \sum_{e \in E} \sum_{i = 1}^{n_e(s)} f'_e(i)$ as a pseudo-potential function. Then, from Lemma~\ref{def:usmooth} it is guaranteed that every local optimum with respect to $\zeta_\textsc{sc}(s)$ has an approximation factor of at most $\lambda := \max_{e \in E} \lambda_e$ with respect to the social cost $c(s)$. 
		Using approximate local search by Orlin et al.~\cite{Orlin:2004:ALS:982792.982880}, we can compute a solution close to that in polynomial time and more so to state the following.
	\begin{corollary}
    For every congestion game 
    the $\epsilon$-local search algorithm using $\zeta_\textsc{sc}(s)$, produces a $\lambda(1 + \epsilon)$ local optimum in running time polynomial in the input length, and $1/\epsilon$.
	\end{corollary}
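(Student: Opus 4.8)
The claim to establish is the final Corollary: for every congestion game, the $\epsilon$-local search algorithm of Orlin et al.~\cite{Orlin:2004:ALS:982792.982880} run with the pseudo-potential $\zeta_{\textsc{sc}}(s) = \sum_{e \in E}\sum_{i=1}^{n_e(s)} f'_e(i)$ produces a $\lambda(1+\epsilon)$-approximate local optimum with respect to the social cost $c(s) = \sum_{e \in E} n_e(s) f_e(n_e(s))$, in time polynomial in the input length and $1/\epsilon$. The plan is to decompose the argument into three pieces: (i) $\zeta_{\textsc{sc}}$ is a genuine exact potential function for the \emph{modified} game in which each resource carries cost $f'_e$, so the associated local search is a standard PLS search with polynomially bounded solution encodings; (ii) the Orlin et al.\ FPTAS applied to $\zeta_{\textsc{sc}}$ terminates in polynomial time and returns a state $s$ that is an $\epsilon$-approximate local minimum of $\zeta_{\textsc{sc}}$, meaning no single-player deviation decreases $\zeta_{\textsc{sc}}$ by more than a $(1+\epsilon)$ factor; and (iii) the smoothness bound of Lemma~\ref{def:usmooth}, applied with the objective $h_e(n) = n f_e(n)$ and the LP-optimal $f'_e$, converts this approximate stationarity of $\zeta_{\textsc{sc}}$ into the desired $\lambda(1+\epsilon)$ guarantee on $c(s)$.

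First I would verify step (i): since $\zeta_{\textsc{sc}}$ has exactly the Rosenthal form built from the functions $f'_e$, it is the exact potential of the congestion game with cost functions $(f'_e)_{e\in E}$, hence a deviation by player $u$ from $s_u$ to $s'_u$ changes $\zeta_{\textsc{sc}}$ by precisely $c'_u(s'_u, s_{-u}) - c'_u(s)$ where $c'_u(s) = \sum_{e\in s_u} f'_e(n_e(s))$. Because the LP $\mathrm{LP}_{\textsc{sc}}$ is solved resourcewise and the $f'_e$ values are bounded (one can reuse the bounds on $f'$ from the proof of Lemma~\ref{lem:largeMStrech}, or simply note $f'_e(n) \le \lambda_e h_e(n)/\,\cdot$ from the $n=0,m=1$-type inequalities), $\zeta_{\textsc{sc}}$ takes values in a range whose logarithm is polynomial in the input, which is exactly the hypothesis Orlin et al.\ need for their quasi-polynomial-to-polynomial transformation of local search; this gives step (ii) essentially by citing their theorem, with the running time polynomial in the input length and $1/\epsilon$.

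The substantive step is (iii). Let $s$ be the state returned, so for every player $u$ and every alternative $s'_u \in S_u$ we have $(1+\epsilon)\,\zeta_{\textsc{sc}}(s'_u, s_{-u}) \ge \zeta_{\textsc{sc}}(s)$; equivalently, by step (i) and the fact that $\zeta_{\textsc{sc}} \le \zeta_{\textsc{sc}}(s)$ telescopes as $\sum_e \sum_{i=1}^{n_e} f'_e(i)$, one deduces a per-player inequality of the form $c'_u(s) \le (1+\epsilon)\,c'_u(s'_u, s_{-u})$ (using that the potential drop from a single-player move equals the cost drop, plus $\zeta_{\textsc{sc}}(s) \ge c'_u(s)$-style bookkeeping analogous to Claim~\ref{F_C_1}). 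Taking $s'_u = s^*_u$ where $s^*$ is the social optimum and summing over $u \in \mathcal{N}$, then invoking Lemma~\ref{def:usmooth} with $m = n_e(s^*)$, $n = n_e(s)$, $h_e(n) = n f_e(n)$ resource by resource, yields $\sum_u c_u(s) = c(s) \le \lambda(1+\epsilon)\, h(s^*) = \lambda(1+\epsilon)\, c(s^*)$ after absorbing the $\epsilon$ terms and using $\mu = 0$. The main obstacle I anticipate is the bookkeeping in translating ``$\epsilon$-approximate local minimum of the potential $\zeta_{\textsc{sc}}$'' (the exact output guarantee of the Orlin et al.\ scheme, which is a statement about potential \emph{differences}) into ``$(1+\epsilon)$-approximate equilibrium in the cost functions $c'_u$'' and then into the social-cost bound; one must be careful that the approximate stationarity of a \emph{sum} (the potential) correctly yields the approximate inequality for each \emph{individual} player's cost, which is where the identity ``single-player potential drop $=$ single-player cost drop'' together with non-negativity of $f'_e$ does the work, exactly paralleling the remark in the Preliminaries that a local optimum of the potential is a Nash equilibrium of the marginal-cost game. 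Once that is in place, plugging in $\lambda := \max_{e\in E}\lambda_e$ and rescaling $\epsilon$ gives the stated corollary.
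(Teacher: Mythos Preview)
Your overall strategy matches the paper's: the paper offers no proof beyond the sentence preceding the corollary, which says that Lemma~\ref{def:usmooth} bounds the social cost at any local optimum of $\zeta_{\textsc{sc}}$ by $\lambda\cdot c(s^*)$, and that Orlin et al.'s scheme produces an approximate local optimum of $\zeta_{\textsc{sc}}$ in polynomial time. Your three-step decomposition is exactly this outline, just fleshed out.

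However, step~(iii) contains a genuine gap. From the Orlin et al.\ guarantee $(1+\epsilon)\,\zeta_{\textsc{sc}}(s'_u,s_{-u})\ge \zeta_{\textsc{sc}}(s)$ and the exact-potential identity you correctly state in step~(i), what you obtain is
\[
c'_u(s)-c'_u(s'_u,s_{-u}) \;=\; \zeta_{\textsc{sc}}(s)-\zeta_{\textsc{sc}}(s'_u,s_{-u}) \;\le\; \epsilon\,\zeta_{\textsc{sc}}(s'_u,s_{-u}).
\]
This does \emph{not} imply $c'_u(s)\le (1+\epsilon)\,c'_u(s'_u,s_{-u})$: for that you would need $\zeta_{\textsc{sc}}(s'_u,s_{-u})\le c'_u(s'_u,s_{-u})$, whereas the Claim~\ref{F_C_1}-style inequality you invoke gives the opposite direction, $\zeta_{\textsc{sc}}\ge c'_u$. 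So the per-player $(1+\epsilon)$-approximate equilibrium w.r.t.\ $c'_u$ is not available, and the appeal to ``$\zeta_{\textsc{sc}}(s)\ge c'_u(s)$-style bookkeeping'' does not close the gap.

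The repair is to skip the per-player equilibrium statement altogether. Sum the displayed inequality over $u$ with $s'_u=s^*_u$ to get $\sum_u\bigl(c'_u(s)-c'_u(s^*_u,s_{-u})\bigr)\le N\epsilon\,\zeta_{\textsc{sc}}(s)$ (up to the harmless $(1+\epsilon)$ in the denominator), plug this directly into the $(\lambda,0)$-smoothness inequality for $h=c$, and obtain $c(s)\le \lambda\,c(s^*)+N\epsilon\,\zeta_{\textsc{sc}}(s)$. To finish you must bound $\zeta_{\textsc{sc}}(s)$ by a multiple of $c(s)$; this follows from the LP constraints (e.g.\ the recursion $f'_e(n+1)\le f'_e(n)+(\lambda_e-1)f_e(n)$ obtained by setting $m=n$, combined with $f'_e(1)\le \lambda_e f_e(1)$ from $n=0,m=1$), after which one absorbs the resulting factor by rescaling $\epsilon$. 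The paper, having stated the corollary without proof, leaves this bookkeeping implicit as well.
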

\noindent \textbf{Linear and Polynomial Cost Functions}
    
For the interesting case of polynomial resource cost functions of maximum degree $d$, similar to Section~\ref{CG_stretch}, we show that for polynomials of small degree, it is sufficient to restrict the attention to the first $K=1154$ values of the cost functions. Hence, we only need solve a linear program of constant size. The following lemma states that for the values of $n$ greater than $K$ and an appropriate value of $\nu$, and $\lambda_d$, we can easily obtain $(\lambda_d,0)$-smoothness by choosing $f^\prime(n)=\nu n^d$.
    
\begin{lemma}
    \label{lem:largeKpoa}
    For $d \le 5$ and $n\ge  1154$, the function $f'(n)= \nu n^d$ with $\nu = \sqrt[d+1]{(d+1)\lambda_d}$ is $(\lambda_d,0)$-smooth with respect to $h(n)=n^{d+1}$ and an appropriate $\lambda_d$. 
     \end{lemma}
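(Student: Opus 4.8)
The plan is to mirror the structure of the proof of Lemma \ref{lem:largeNPotential}, but now with respect to the objective function $h(n) = n^{d+1}$ (the social-cost objective, since $n \cdot f(n) = n \cdot n^d = n^{d+1}$) instead of the potential $\phi(n)$, and using the simpler, non-subset smoothness condition of Lemma \ref{def:usmooth}. Concretely, for each degree $d \le 5$ with the associated value $\lambda_d = \Psi_d$ from Table \ref{table:locality}, I would exhibit a $\nu$ such that the smoothness inequality of Lemma \ref{def:usmooth} with $h_e(n) = n^{d+1}$ and $f'(n) = \nu n^d$, namely
\[
\lambda_d \, m^{d+1} - m \,\nu\,(n+1)^d + n \,\nu\, n^d - n^{d+1} \ge 0,
\]
holds for all $n \ge 1154$ and all $m \in [0,N]$. (Note there is no $z$-shift here, because Lemma \ref{def:usmooth}, unlike Lemma \ref{lem:subset}, is the plain two-parameter smoothness condition; this is why the bookkeeping is a notch simpler than in Lemma \ref{lem:largeNPotential}.)

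First I would rewrite the left-hand side as $\nu n^{d+1} - n^{d+1} + \lambda_d m^{d+1} - \nu m (n+1)^d = (\nu - 1) n^{d+1} + \lambda_d m^{d+1} - \nu m (n+1)^d$ and minimize over the continuous relaxation of $m$. Differentiating in $m$ gives the minimizer $m^\star = \left(\dfrac{\nu (n+1)^d}{(d+1)\lambda_d}\right)^{1/d}$; substituting this back, the two $m$-dependent terms collapse to a single term proportional to $(n+1)^{d+1}$, leaving an expression of the form
\[
(\nu - 1) n^{d+1} - \frac{d}{d+1}\,\frac{\nu^{\frac{d+1}{d}}}{\big((d+1)\lambda_d\big)^{1/d}}\,(n+1)^{d+1}.
\]
Choosing $\nu = \sqrt[d+1]{(d+1)\lambda_d}$ is exactly what makes $\dfrac{\nu^{\frac{d+1}{d}}}{((d+1)\lambda_d)^{1/d}} = \nu$, so the bracket becomes $(\nu-1) n^{d+1} - \frac{d}{d+1}\nu (n+1)^{d+1}$, i.e. $\nu\left((1 - \tfrac{d}{d+1})n^{d+1} - \tfrac{d}{d+1}\big((n+1)^{d+1} - n^{d+1}\big)\right) - n^{d+1}$. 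Nonnegativity then reduces to a one-variable inequality of the shape $\nu \ge \dfrac{(n+1)^{d+1}}{\frac{1}{d}n^{d+1} - \big((n+1)^{d+1} - n^{d+1}\big)}$ (after collecting terms), whose right-hand side tends to $1$ as $n \to \infty$ and is decreasing for large $n$, so it suffices to check that at $n = 1154$ this ratio is at most $\nu = \sqrt[d+1]{(d+1)\lambda_d}$ for each $d \in \{1,\dots,5\}$ with $\lambda_d = \Psi_d$ — a finite, explicit numerical check, exactly as in Lemma \ref{lem:largeNPotential}.

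The main obstacle is purely bookkeeping: getting the algebra of the back-substitution right (the exponents $\frac{d+1}{d}$, the constant $\frac{d}{d+1}$, and the fact that $(n+1)^d - n^d$ and $(n+1)^{d+1} - n^{d+1}$ are each $\Theta(n^{d-1})$ and $\Theta(n^d)$ respectively, hence lower-order), and then verifying that the threshold $1154$ is genuinely large enough for the worst case $d = 5$, where $\Psi_5 \approx 641.32$ and the convergence $\frac{(n+1)^{d+1}}{\frac1d n^{d+1} - O(n^d)} \to d$... — one must be careful that the relevant limit and the gap to $\nu$ are computed correctly. A secondary point to handle cleanly is that the unconstrained minimizer $m^\star$ may exceed $N$; but since the quadratic-type objective in $m$ is convex (as $\lambda_d m^{d+1}$ dominates for large $m$), restricting $m$ to $[0,N]$ only increases the left-hand side, so the bound at the interior critical point is the binding one — this is the analogue of the reasoning already used implicitly in Lemma \ref{lem:largeMStrech}, and I would invoke it in one sentence rather than re-deriving it.
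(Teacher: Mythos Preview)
Your overall plan is exactly the paper's: rewrite the smoothness inequality as $(\nu-1)n^{d+1} + \lambda_d m^{d+1} - \nu m(n+1)^d$, minimize in $m$ at $m^\star = \big(\nu(n+1)^d/((d+1)\lambda_d)\big)^{1/d}$, substitute back, then check a one-variable inequality at $n=1154$ for each $d$ with $\lambda_d=\Psi_d$.

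However, you have a concrete algebraic slip at the key simplification. With $\nu = ((d+1)\lambda_d)^{1/(d+1)}$ one has $\nu^{(d+1)/d} = ((d+1)\lambda_d)^{1/d}$, so
\[
\frac{\nu^{\frac{d+1}{d}}}{((d+1)\lambda_d)^{1/d}} = 1,\qquad\text{not } \nu.
\]
Consequently the back-substituted expression is $(\nu-1)n^{d+1} - \tfrac{d}{d+1}(n+1)^{d+1}$, and the correct reduced condition is
\[
\nu - 1 \;\ge\; \Bigl(1-\tfrac{1}{d+1}\Bigr)\,\frac{(n+1)^{d+1}}{n^{d+1}},
\]
whose right-hand side tends to $\tfrac{d}{d+1}<1$ as $n\to\infty$; this is precisely the paper's inequality (15), and the check at $n=1154$ then succeeds for all $d\le 5$. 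Your version, with the extra factor $\nu$ on the $(n+1)^{d+1}$ term, leads (as you yourself half-suspected) to a limit of $d$ on the right-hand side, and since $\nu=\sqrt[d+1]{(d+1)\Psi_d}<d$ already for $d=3$ (e.g.\ $\nu\approx 2.81$), your stated inequality would fail for $d\in\{3,4,5\}$. Fix that one line and the rest of your argument goes through verbatim.
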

      \begin{proof}For each degree $d$ and an appropriate value of $\lambda_d = \Psi_d$ (Table~\ref{table:locality}), we need to show that there exist a $\nu$ such that the the smoothness condition in Lemma~\ref{def:usmooth} holds for all $n\ge 1154$, i.e.,
\[\lambda_d m^{d+1} - m \nu (n+1)^d + n \nu n^d - n^{d+1} \ge 0. \]
For all $\lambda_d > 0$,
\begin{align*}
    &\lambda_d m^{d+1} - m \nu (n+1)^d + n \nu n^d - n^{d+1}
    \\&=\lambda_d m^{d+1} - m \nu (n+1)^d + n^{d+1}(\nu-1).
\end{align*}
The above expression is minimized at, \[ m = \sqrt[d]{\frac{\nu(n+1)^d}{(d+1)\lambda_d}}.\]
Substituting for $m$ gives,
\begin{align*}
    &=\lambda_d \left(\sqrt[d]{\frac{\nu(n+1)^d}{(d+1)\lambda_d}}\right)^{d+1} - \left(\sqrt[d]{\frac{\nu(n+1)^d}{(d+1)\lambda_d}}\right) \nu (n+1)^d + n^{d+1}(\nu-1)
    \\&=\left(\frac{1}{d+1}\right) \frac{\nu^\frac{d+1}{d}}{\sqrt[d]{(d+1)\lambda_d}} (n+1)^{d+1}- \frac{\nu^\frac{d+1}{d}}{\sqrt[d]{(d+1)\lambda_d}} (n+1)^{d+1} + n^{d+1}(\nu-1)
    \\&=\left(\frac{1}{d+1} - 1\right)\frac{\nu^\frac{d+1}{d}}{\sqrt[d]{(d+1)\lambda_d}} (n+1)^{d+1} + n^{d+1}(\nu-1)
\end{align*}
We need to show that there exists a $\nu \ge 0$ such that,
\begin{align*}
    &\left(\frac{1}{d+1} - 1\right)\frac{\nu^\frac{d+1}{d}}{\sqrt[d]{(d+1)\lambda_d}} (n+1)^{d+1} + n^{d+1}(\nu-1) \ge 0
\end{align*}
equivalent to,
\begin{align*}
    & n^{d+1}(\nu-1) \ge \left(1- \frac{1}{d+1}\right)\frac{\nu^\frac{d+1}{d}}{\sqrt[d]{(d+1)\lambda_d}} (n+1)^{d+1}\tag{15}
\end{align*}
For $\nu = \sqrt[d+1]{(d+1)\lambda_d}$ and $\forall n \in \mathbb{N}$ such that,
\[ \frac{(n+1)^{d+1}}{n^{d+1}}\left( 1- \frac{1}{d+1}\right) \le \left(\sqrt[d+1]{(d+1)\lambda_d} -1 \right),\]
the inequality (15) holds. Also, using the fact that,
\[ \lim_{n \to \infty} \frac{(n+1)^{d+1}}{n^{d+1}} = 1,\]
and choice of $\lambda_d = \Psi_d$ (Table~\ref{table:locality}), gives for $0 < d \le 5$, that $n\ge 1154$ is sufficient.
\end{proof}

We further note, that for a fixed $\lambda_d > 0$ and for each $n$, we only need to consider a limited range for $m$ in the LP$_{\textsc{sc}}$.
     \begin{lemma}
    \label{largeMPoA}
   For a fixed $n$, if $\lambda_d \cdot m^{d+1} - m f(n+1) + n f(n) \ge n^{d+1}$ is true for all $m \le (n+1)^2$, it also holds for all $m > (n+1)^2$.
    \end{lemma}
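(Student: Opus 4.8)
The plan is to mirror the structure of the proof of Lemma~\ref{lem:largeMStrech}, but now working with the social-cost objective $h(n)=n^{d+1}$ and a single summand $m^{d+1}$ on the left-hand side instead of the sum $\sum_{i=z+1}^{m+z} i^d$. First I would fix $n \in \mathbb{N}$ and $\lambda_d > 0$, and assume we must verify
\[
\lambda_d m^{d+1} - m f'(n+1) + n f'(n) \ge n^{d+1}
\]
for all $m > (n+1)^2$, given that it holds for all $m \le (n+1)^2$. The key observation, as before, is that the term $n f'(n)$ is nonnegative (since $f'\ge 0$), so it only helps the inequality and may be dropped for the purpose of establishing a lower bound; what remains is to control $f'(n+1)$ from above.

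The second step is to extract an a priori upper bound on $f'(n+1)$ from the smoothness constraints themselves. As in Lemma~\ref{lem:largeMStrech}, plugging $n=0$, $m=1$ into the LP$_{\textsc{sc}}$ constraint (with $h_e(m)=m^{d+1}$, so $h_e(1)=1$, $h_e(0)=0$) gives $f'(1)\le\lambda_d$; and plugging the choice $m=n$ into the constraint for general $n$ telescopes to a bound of the form $f'(n+1)\le (\lambda_d-1)\,n^{d+1} + \lambda_d$ (possibly with a slightly looser constant absorbed into the $n^{d+1}$ term, using $h(n)=n^{d+1}$ and monotonicity). I expect the exact bookkeeping here to differ from the $\phi$-case only in the replacement of $\sum_j j^d$ by $n^{d+1}$, so the bound should come out cleaner.

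The third step is to substitute that upper bound and estimate. We get
\[
\lambda_d m^{d+1} - m f'(n+1) + n f'(n) - n^{d+1}
\;\ge\; \lambda_d m^{d+1} - m\bigl((\lambda_d-1)n^{d+1}+\lambda_d\bigr) - n^{d+1},
\]
and after absorbing the $-n^{d+1}$ and $-m\lambda_d$ terms one is left to show something like $\lambda_d m^{d+1} \ge \lambda_d m\,(n^{d+1}+1)$ for $m\ge(n+1)^2$, i.e. $m^d \ge n^{d+1}+1$. The final step is the elementary numerical check that $m \ge (n+1)^2$ implies $m^d \ge (n+1)^{2d} \ge n^{d+1}+1$ for all $d\ge 1$ and $n\ge 1$, which closes the argument. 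The main obstacle — really the only delicate point — is getting the constant in the a priori bound on $f'(n+1)$ tight enough that the threshold $(n+1)^2$ (rather than a larger power of $n$) suffices; here the slack from dropping $n f'(n)\ge 0$ and from $m^d \ge (n+1)^{2d}$ being far larger than $n^{d+1}$ gives ample room, so I expect no difficulty. Everything else is routine monotonicity of $m \mapsto m^{d+1}$ and of the cost functions.
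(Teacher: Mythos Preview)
Your proposal is correct and follows essentially the same route as the paper: bound $f'(n+1)\le(\lambda_d-1)n^{d+1}+\lambda_d$ via the base case $n=0,m=1$ and the recursion from $m=n$, drop the nonnegative term $nf'(n)$, and then reduce to the elementary inequality $m^d\ge n^{d+1}+1$ for $m\ge(n+1)^2$. The paper is slightly terser at the final step (it just writes ``$\ge 0$''), while you spell out the check $(n+1)^{2d}\ge n^{d+1}+1$ explicitly, but the arguments are the same.
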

    \begin{proof}For any given $n\in \mathbb{N}$ and $\lambda_d > 0$, we show for all $m \ge (n+1)^2$ that, \[\lambda_d m^{d+1} - m f'(n+1) + n f'(n) - \sum_{i=1}^{n} n^{d+1} \ge 0 .\]

    We first upper bound the feasible values for $f'(n)$. From the smoothness condition in Lemma~\ref{def:usmooth}, note that for $n=0$ and $m=1$, implies that $f'(1) \le \lambda_d$. Furthermore, by the choice of $m=n$, we get $f'(n+1) \le f'(n) + \frac{\lambda_d -1}{n} n^{d+1}$. By recursion, we obtain $f'(n+1)\le (\lambda_d-1)\sum_{i=1}^{n} i^d + \lambda_d$
    which we can simply bound by
    $f'(n+1) \le (\lambda_d - 1) \cdot n^{d+1} + \lambda_d$.
     Now we can bound for $m\ge (n+1)^2$.
     
     \begin{align*}
     &\lambda_d m^{d+1} - m f'(n+1) + n f'(n) - n^{d+1}
     \\&\ge \lambda_d m^{d+1} - m \left((\lambda_d -1 ) n^{d+1} + \lambda_d\right)- n^{d+1}
     \\&= \lambda_d m^{d+1} - m\lambda_d n^{d+1} + mn^{d+1} - m\lambda_d - n^{d+1}
     \\&\ge \lambda_d m^{d+1} - m \lambda_d  n^{d+1} - m\lambda_d
     \\&\ge 0
     \end{align*}
    \end{proof}

   As a consequence of Lemma \ref{lem:largeKpoa} and \ref{largeMPoA} it only remains to solve the following linear program of constant size for each $d \le 5$ to obtain our results $\Psi_d$ (listed in Table~\ref{table:locality}). Our results match the recent results that were obtained independently by Paccagnan et al.~\cite{chandan1}.
    \begin{align*}
		\min \Psi_d
		\\\Psi_d m^{d+1} - m f'(n+1) + n f'(n) &\ge n^{d+1} &\forall n\in [0,K), m\in [0,(K+1)^2]
		\\f'(K) &\le \nu K^d&  
		\\f'(n) &\ge 0   &\forall n \in [0,K]
	\end{align*}

	\begin{corollary}\label{cor_ls}
	For every congestion game with polynomial cost functions of degree $d \le 5$, each cost function $f'_e$ can be computed in constant time and the resulting game is $(\Psi_d,0)$-smooth with respect to social cost.
	\end{corollary}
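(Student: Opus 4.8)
The plan is to upgrade the (a priori instance-dependent, size-$N$) program LP$_{\textsc{sc}}$ into a single linear program whose size depends only on $d$, solve it once, and then show that its solution — extended by a closed-form tail — gives, on every resource, a cost function $f'_e$ that is $(\Psi_d,0)$-smooth with respect to the social-cost objective $h_e(n)=n\cdot f_e(n)$. Summing the per-resource inequalities then delivers smoothness of the whole game.

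First I would apply the decomposition-and-scaling reduction already used in Section~\ref{CG_stretch}: replace a resource with cost $\sum_{i=0}^d a_i x^i$ by $d+1$ resources with costs $a_0,a_1x,\ldots,a_dx^d$, and observe that the $(\lambda,0)$-smoothness constant of a resource $a_i x^i$ equals that of $x^i$ (scale the modified cost function by $a_i$). The constant resource $i=0$ is $(1,0)$-smooth by taking $f'_e=f_e$, and the binding case is the top degree $f_e(x)=x^d$, for which $h_e(n)=n^{d+1}$ and the per-resource smoothness inequality of Lemma~\ref{def:usmooth} becomes exactly $\Psi_d\, m^{d+1}-m f'_e(n+1)+n f'_e(n)\ge n^{d+1}$, the constraint defining LP$_{\textsc{sc}}$.

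Next I would shrink this to constant size by invoking Lemma~\ref{largeMPoA} and Lemma~\ref{lem:largeKpoa} with $K=1154$: the former says that for each fixed $n$ only the constraints with $m\le (n+1)^2$ need be imposed, and the latter says that for $n\ge K$ one may simply put $f'_e(n)=\nu n^d$ with $\nu=\sqrt[d+1]{(d+1)\Psi_d}$ and obtain smoothness there automatically. Hence it remains to solve LP$_\phi^K$ — finitely many variables $f'_e(0),\ldots,f'_e(K)$ and $\Psi_d$, and finitely many constraints, with $f'_e(K)$ tied to the tail value $\nu K^d$ so that the truncated LP and the closed-form tail agree at the junction — which, since $K$ and the number of constraints depend only on $d$, is solvable in $O(1)$ time. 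I would then define $f'_e$ globally by gluing: the LP values for $n\le K$ and $f'_e(n)=\nu n^d$ for $n>K$.

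Finally I would verify $(\Psi_d,0)$-smoothness for all $n,m\ge 0$ by a three-way split. For $n\le K-1$ the inequality involves only the LP values $f'_e(n),f'_e(n+1)$ and holds for $m\le (n+1)^2\le (K+1)^2$ by LP$_\phi^K$ and for all larger $m$ by Lemma~\ref{largeMPoA}; for $n\ge K+1$ it involves only the tail values $\nu n^d,\nu(n+1)^d$ and holds by Lemma~\ref{lem:largeKpoa}; and the single junction case $n=K$ uses $f'_e(K)=\nu K^d$ and $f'_e(K+1)=\nu(K+1)^d$ and is again covered by Lemma~\ref{lem:largeKpoa}. Summing the per-resource inequality over $e\in E$ with $n=n_e(s)$ and $m=n_e(s^*)$ yields $(\Psi_d,0)$-smoothness of the game with respect to social cost, and evaluating $f'_e$ at any load is clearly constant-time once the constant-size LP has been solved. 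The main obstacle is the bookkeeping at the junction $n=K$: one must choose $K$ large enough that the $\lim_{n\to\infty}$ estimate in Lemma~\ref{lem:largeKpoa} applies (this is the source of $K=1154$ for $d\le 5$) and must pin the LP's value at $K$ to the tail value so that the finite and closed-form regimes do not disagree on $f'_e(K)$; a minor secondary point is confirming that the monomial decomposition does not inflate the smoothness constant past $\Psi_d$.
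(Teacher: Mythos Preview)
Your proposal is correct and follows exactly the approach the paper takes (the paper states the corollary without an explicit proof, but the argument is implicit in Lemmas~\ref{lem:largeKpoa} and~\ref{largeMPoA} together with the constant-size LP displayed immediately before the corollary); your write-up simply spells out the details, including the monomial decomposition, the truncation at $K=1154$, and the gluing with the tail $f'(n)=\nu n^d$. One small slip: you write ``LP$_\phi^K$'' where you mean the constant-size social-cost LP of Section~\ref{section_extension} (the one with objective $\Psi_d$ and constraints $\Psi_d m^{d+1}-m f'(n+1)+n f'(n)\ge n^{d+1}$), not the potential-function LP of Section~\ref{CG_stretch}.
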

	
	  	\subsubsection*{Lower Bound}
   	\label{LB_Section}
	
	Any feasible solution to the linear program LP$_h$ emerging from Lemma~\ref{def:usmooth} are cost functions $f'_e:\mathbb{N} \mapsto R_+$ that guarantees that the objective value associated with the function $h$ is at most $\lambda := \max_{e \in E} \lambda_e$. We can show that this is in fact optimal. That is, LP$_h$ is not only optimizing the smoothness inequality, but also that there exists no other resource cost function that can guarantee a smaller objective value than $\lambda$. To that end, we consider the dual of  LP$_h$ and show that for every feasible solution of the dual, we can construct an instance of a selfish scheduling game with an objective value that is equal to the value of the dual LP solution, regardless of the actual cost function of the game. 
	
	We construct a selfish scheduling game with identical machines. This is a congestion game in which players' strategies are singletons. Furthermore, each resource has the same cost function and hence, players only seek to choose a resource with minimal load. Obviously, every equilibrium in the scheduling game is an equilibrium in a congestion game in which the resource cost function is an arbitrary non-decreasing function.
	The dual program LPD$_h$ is as follows,
		\begin{align*}
		\max \sum_{n=0}^N \sum_{m=0}^N h(n) \cdot y_{n,m}\tag{16}
		\\\sum_{n=0}^N \sum_{m=0}^N h(m) \cdot y_{n,m}&\le 1\qquad\ \tag{17}
		\\\sum_{m = 0}^N n \cdot y_{n,m}  - \sum_{m=0}^N m \cdot y_{n-1,m} &\le 0 \hspace{1cm}  &\text{for all }  n \in [0, N]
		\\y_{n,m} &\geq 0 \hspace{1cm}  &\text{for all } n,m \in [0, N]
		\end{align*}

\begin{lemma}
\label{lowerbound_lemma}
		Every optimal solution of LPD$_h$ with objective value $\lambda$ can be turned into an instance of selfish scheduling on identical machines with an 
		objective value of $\lambda - \epsilon$ for an arbitrary $\epsilon > 0$.
	\end{lemma}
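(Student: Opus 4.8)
The plan is to read an optimal dual solution of LPD$_h$ as the blueprint of a scheduling instance on identical machines, using the $y_{n,m}$ as machine multiplicities and the ``flow'' constraints of LPD$_h$ as routing instructions for the players. Start from an optimal vertex solution $y^*$ of LPD$_h$, with value $\lambda = \sum_{n,m} h(n)\,y^*_{n,m}$. Since the polyhedron defining LPD$_h$ is rational when the $h(k)$ are (and otherwise one perturbs $y^*$ within the $\epsilon$-budget), $y^*$ may be taken rational; and since (17) is the only inhomogeneous constraint, optimality forces it to be tight, $\sum_{n,m} h(m)\,y^*_{n,m} = 1$ (otherwise one could scale $y^*$ up). Clearing denominators, fix an integer $D$ so that $z_{n,m} := D\,y^*_{n,m} \in \mathbb{N}$.

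Next I would build the instance. For every pair $(n,m)$ create $z_{n,m}$ identical machines, all carrying the same non-decreasing cost function $f$ (the one whose social cost is $h$); call these the \emph{type-$(n,m)$} machines. Give every player exactly two strategies, its ``$s$-machine'' and its ``$s^*$-machine'', and assign players so that each type-$(n,m)$ machine is the $s$-choice of exactly $n$ players and the $s^*$-choice of exactly $m$ players. The assignment is produced level by level in the $s$-load $n$: the $n\sum_m z_{n,m}$ players that occupy the $s$-level-$n$ machines in $s$ are given their second strategy on $s$-level-$(n-1)$ machines, where they become part of those machines' load in $s^*$. The dual inequality $\sum_m n\,y^*_{n,m} \le \sum_m m\,y^*_{n-1,m}$ says precisely that the number of such players does not exceed the total $s^*$-load available on the level-$(n-1)$ machines, so a valid routing exists; whenever the inequality is tight (by complementary slackness this happens at every $n$ with $f'(n)>0$ in the primal optimum) the $s^*$-slots of level $n-1$ are filled exactly, and any slack merely leaves a few $s^*$-slots empty, which lowers the cost of $s^*$ and thus only helps. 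The bottom level ($n=0$: machines empty in $s$) and the top level are treated separately by the same bookkeeping.

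The remaining step is verification and counting. I claim $s$ is a pure Nash equilibrium of the scheduling game: a player whose $s$-choice is a level-$n$ machine sits at load $n$, and its unique alternative is a level-$(n-1)$ machine, which has load $n-1$ in $s$ and hence load $n$ once the player moves there, so the deviation is not improving. This holds for every player, so $s$ is an exact pure Nash equilibrium, and therefore also an equilibrium of the congestion game with any non-decreasing resource cost function. Counting, the social cost of $s$ is $\sum_{n,m} h(n)\,z_{n,m} = D\lambda$, while the social cost of $s^*$ is $\sum_{n,m} h(m)\,z_{n,m} = D$ (or less, if some $s^*$-slots were left empty); since the true optimum of the instance costs at most that of $s^*$, the price of anarchy of the instance is at least $\lambda$. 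The stated $-\epsilon$ absorbs the discretization: replacing $y^*$ by a nearby feasible point of value at least $\lambda-\epsilon$ with a convenient common denominator (and tolerating the handful of empty slots) gives an instance whose objective value is at least $\lambda-\epsilon$.

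The main obstacle is the middle step — turning the aggregate flow constraints of LPD$_h$ into one global assignment of players to ordered pairs of machines that simultaneously realizes the prescribed $s$-load $n$ and $s^*$-load $m$ on every machine \emph{and} keeps every player's single alternative exactly one level below, so that no improving deviation from $s$ exists. Concretely this is a bipartite-matching argument between consecutive $s$-levels whose feasibility is guaranteed by $\sum_m n\,y^*_{n,m}\le\sum_m m\,y^*_{n-1,m}$; getting the boundary levels and any slack constraints to fit the same pattern, without accidentally placing a player's alternative at load $\le n-2$, is the part that needs care.
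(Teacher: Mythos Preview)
Your proposal is correct and follows essentially the same construction as the paper: read the dual variables $y_{n,m}$ as machine multiplicities, give each player an equilibrium strategy on a level-$n$ machine and an alternative on a level-$(n-1)$ machine, and use the dual constraint $\sum_m n\,y_{n,m}\le\sum_m m\,y_{n-1,m}$ to guarantee the assignment exists. The only cosmetic differences are that the paper handles integrality by rounding down and scaling rather than by taking a rational vertex, and that your $s$/$s^*$ labels are swapped relative to the paper's; your extra remarks on tightness of (17) and complementary slackness are sound but not needed for the argument.
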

		\begin{proof}Let $\tilde{y}$ be a feasible solution to LPD$_h$ with $\lambda = \sum_{n=0}^N \sum_{m=0}^N h(n) \cdot \tilde{y}_{n,m}$. 
	We round down each $\tilde{y}_{n,m}$ to rational numbers $y_{n,m} \ge (1-\frac{\epsilon}{\lambda}) \tilde{y}_{n,m}$ and let $M$ be a sufficiently large scaling factor such that each $y_{n,m}\cdot M$ is an integer. We construct a congestion game as follows. 
	The game $G = (\mathcal{N}, R, \{S_u\}_{u \in \mathcal{N}}, \{c_r\}_{r \in R})$ consist of a set of players $\mathcal{N} =\bigcup_{n \in [0,N]} \mathcal{N}_{n}$, where each set  $\mathcal{N}_{n}$ consists of $n\cdot \sum_{m =0}^N y_{n,m} \cdot M $ many players. The  set of resources is $R = \bigcup_{n \in [0,N], m \in [0,N]} R_{n,m}$, where $R_{n,m}$ represents a pool consisting of $y_{n,m} \cdot M$ identical machines.  
	
	We will make sure that in the game $G$, there is an equilibrium $s^*$ in which on each machine in each set $R_{n,m}$ there are exactly $n$ many players. Hence, using (16) the total cost over all resources in $s^*$ is $\sum_{n=0}^N \sum_{m=0}^N h(n) \cdot y_{n,m}\cdot M \ge \lambda (1-\epsilon) M$. Additionally, there is a state $s$ in which  on each machine in each set $R_{n,m}$ there are exactly $m$ many players. Hence, using (17) the total cost over all resources in $s$ is $\sum_{n=0}^N \sum_{m=0}^N h(m) \cdot y_{n,m} \cdot M \le M$.  
	
	Each player from a set $\mathcal{N}_{n}$ in the game has two strategies, which we call an equilibrium strategy and an optimal strategy.
	The equilibrium strategy consists of one particular resource $r \in \bigcup_{m \in {[0,N]}} R_{n,m}$
	and the optimal strategy of a particular resource $r \in \bigcup_{m \in {[0,N]}} R_{n-1, m}$. The assignment of resources to strategies is such that each resource $r \in R_{n, m}$ belongs to exactly $n$ equilibrium strategies of players from $\mathcal{N}_{n}$ and at most $m$ optimal strategies of players from $\mathcal{N}_{n+1}$. 
	Note that the existence of such an assignment is guaranteed by the feasibility of $y$ and hence, $\sum_{m = 0}^N n \cdot y_{n,m}  \le \sum_{m=0}^N m \cdot y_{n-1,m}$. If each player chooses the equilibrium strategy, we obtain a state $s^*$ as described above which is a pure Nash equilibrium as switching to the optimal strategy yields exactly the same load. If each player chooses the optimal strategy we obtain a state $s$ as described above. Therefore, the objective value is at least $\lambda- \epsilon$.
\end{proof}

	From Lemma~\ref{lowerbound_lemma} we remark that the taxes computed by LP$_{\textsc{sc}}$ are optimal. Evidently our lower bound of $2.012$ for congestion games with linear cost functions matches the price of anarchy bound for selfish scheduling games on identical machines~\cite{CaragiannisFKKM06}.
  
		\section{Conclusion and Open Problems}
The most interesting question which was the initial motivation for this work is the complexity of approximate equilibria. We find it very surprising that the technique yields such a significant improvement, e.g., for linear congestion games from $2$ to $1.61$, by using essentially the same algorithm of Caragiannis et al.~\cite{Caragiannis2011}. 
 
	    However, the algorithmic technique is limited only by the lower bound for approximation factor of the stretch implied in Roughgarden~\cite{Roughgarden:2014:BNE:2706700.2707451}. Hence, further significant improvements may need new algorithmic ideas. 
	    On the lower bound side, not much is known for linear or polynomial congestion games. The only computational lower bound for approximate equilibria is from Skopalik and V{\"o}cking~\cite{Skopalik:2008:IPN:1374376.1374428} using unnatural and very steep cost functions.
	    
	    We believe that the technique of perturbing the instance of an (optimization) problem such that a simple local search heuristic (or an equilibrium) guarantees an improved approximation ratio can be applied in other settings as well. It would be interesting to see, whether one can achieve similar results for variants and generalizations of congestion games such as weighted~\cite{Awerbuch:2005:PRU:1060590.1060599}, atomic- or integer-splittable~\cite{doi:10.1002/net.3230030104, Roughgarden:2011:LSP:2133036.2133058} congestion games, scheduling games~\cite{correa2012efficiency, gairing2010computing, cole2015decentralized}, etc.
	    Considering other heuristics such as greedy or one-round walks~\cite{CHRISTODOULOU201213, Bilo2011oneround, KlimmGreedy, bil_et_al} would be another natural direction. 

\bibliographystyle{plain}
\bibliography{references}
\appendix
    \section{Appendix}
    \label{missing_proofs_apxalgo}
    \subsection*{Missing proofs of  Section~\ref{def_prelim}}

	\begin{prooff}{Theorem~\ref{theo:smoothness}}
 
		Let $s$ be an arbitrary pure Nash equilibrium and $s^*$ be the optimal solution. From the Nash inequality we know,
		\[ \forall u \in \mathcal{N},~c_u(s)  \le c_u(s_u^*, s_{-u}).\]
		Then, summing over all the $\mathcal{N}$ players gives,
		\[ \sum_{u \in \mathcal{N}}c_u(s)  - \sum_{u \in \mathcal{N}}c_u(s_u^*, s_{-u}) \le 0. \]
		
		By the definition of $(\lambda, \mu)$ smoothness in Definition~\ref{def:lambdasmooth} we  have,
		\[ (1-\mu)\cdot h(s)  \le \lambda \cdot  h(s^*) + \sum_{u \in \mathcal{N}}c_u(s) -  \sum_{u \in \mathcal{N}}c_u(s_u^*, s_{-u})\]
		From the Nash inequality the theorem follows.
	\end{prooff}

		\begin{prooff}{Lemma~\ref{def:usmooth}}
		
		Let $s$ and $s^*$ be arbitrary solutions. Summing the inequality of the lemma with $m=n_e(s^*)$ and $n=n_e(s)$ for all $e \in E$ gives,
		\begin{align*}
		\lambda \sum_{e \in E} h_e(n_e(s^*) &\ge  \sum_{e \in E} n_e(s^*) f'_e(n_e(s)+1) - \sum_{e \in E}  n_e(s) f'_e(n_e(s))\  +  \sum_{e \in E} h_e(n_e(s))\\
			 \lambda \cdot  h(s^*) &\geq   \sum_{u \in \mathcal{N}}  c'_u(s_u^*,s_{-u}) - \sum_{u \in \mathcal{N}}  c'_u(s) + h(s)
		\end{align*}
		which is the $(\lambda, 0)$-smoothness condition of Definition~\ref{def:lambdasmooth}.
 \end{prooff}
 
 \begin{prooff}{Lemma~\ref{lem:subset}}
 
 The proof is analogous to the proof of Lemma~\ref{def:usmooth}. Consider an arbitrary subset of players $F \subseteq \mathcal{N}$, an arbitrary state $s$, and a subgame $G_s^F:=(F,E,(S_u)_{u \in F}, (f^F_e)_{e_\in E})$ induced by freezing the remaining players from $\mathcal{N} \setminus F$, that is, let $f^F_e(x):=f_e(x+n_e^{\mathcal{N}\setminus F}(s))$ where $n_e^{\mathcal{N}\setminus F}(s)$ is the number of players outside of $F$ on resource $e$ in the state $s$. Let $s^*$ be an arbitrary solution. Summing the inequality of the lemma with $m=n^F_e(s^*)$ and $n=n^F_e(s)$ for all $e \in E$ gives,
 		\begin{align*}
		\lambda \sum_{e \in E} \phi_e^F(n_e(s^*) &\ge  \sum_{e \in E} n_e^F(s^*) f'_e(n_e(s)+1) - \sum_{e \in E}  n_e^F(s) f'_e(n_e(s)) +  \sum_{e \in E} \phi^F_e(n_e(s))
		\end{align*}
		equivalent to,
		\begin{align*}
            \lambda \cdot \phi^F(s^*) &\ge \sum_{u \in F} c'_u(s_u^*,s_{-u}) - \sum_{u \in F} c'_u(s)+\phi^F(s),
		\end{align*}
		which is the $(\lambda, 0)$-smoothness condition of Definition~\ref{def:usmooth_subset}.
 \end{prooff}
	\begin{theorem}[Extension Theorem]
	\label{theo:extension}
		For every $(\lambda, \mu)$-smooth cost-minimization game $G$ with respect to an arbitrary objective function $h$, every coarse correlated equilibrium $\sigma$, and every outcome $s^*$, \[\mathbb{E}_{s \sim \sigma}[h(s)] \le \frac{\lambda}{1 - \mu} \cdot h(s^*).\]
	\end{theorem}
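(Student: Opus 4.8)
The plan is to mirror the proof of Theorem~\ref{theo:smoothness}, replacing the pure Nash inequality by the defining inequality of a coarse correlated equilibrium and carrying everything through in expectation. First I would fix the outcome $s^*$ and, for each player $u \in \mathcal{N}$, invoke the deviation-to-$s_u^*$ condition from the definition of a CCE, namely $\mathbb{E}_{s\sim\sigma}[c_u(s)] \le \mathbb{E}_{s\sim\sigma}[c_u(s_u^*,s_{-u})]$. Summing these over all players and using linearity of expectation gives $\mathbb{E}_{s\sim\sigma}\big[\sum_{u\in\mathcal{N}}c_u(s_u^*,s_{-u}) - \sum_{u\in\mathcal{N}} c_u(s)\big] \ge 0$, the analogue of the step ``summing over all the $\mathcal{N}$ players'' in the proof of Theorem~\ref{theo:smoothness}.

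Next I would apply the smoothness inequality of Definition~\ref{def:lambdasmooth}, which holds for every pair of outcomes, with second argument our fixed $s^*$ and first argument an arbitrary $s$, and then take the expectation over $s\sim\sigma$. Since $h(s^*)$ is a constant, linearity of expectation yields $\lambda\, h(s^*) \ge \mathbb{E}_{s\sim\sigma}\big[\sum_{u\in\mathcal{N}} c_u(s_u^*,s_{-u}) - \sum_{u\in\mathcal{N}} c_u(s)\big] + (1-\mu)\,\mathbb{E}_{s\sim\sigma}[h(s)]$. Combining this with the inequality from the previous paragraph, the expected ``deviation cost minus cost'' term is nonnegative and can be dropped, leaving $\lambda\, h(s^*) \ge (1-\mu)\,\mathbb{E}_{s\sim\sigma}[h(s)]$; dividing by $1-\mu > 0$ gives the claimed bound $\mathbb{E}_{s\sim\sigma}[h(s)] \le \frac{\lambda}{1-\mu}\, h(s^*)$.

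There is no genuinely hard step here; this is the standard expected-value smoothness argument, and the only points needing slight care are that $s^*$ must be held fixed so that the CCE inequality applies verbatim with the constant deviation $s_u^*$, and that the smoothness inequality, stated for deterministic outcome pairs, transfers to expectations purely by linearity (no convexity of $h$ is invoked). Finally, since the CCE condition is implied by the mixed Nash and (ordinary) correlated equilibrium conditions when restricted to constant deviations $s_u^*$, the analogous bounds for those equilibrium concepts, as well as for the empirical distribution of any no-regret sequence, follow immediately as special cases of the same computation.
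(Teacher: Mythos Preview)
Your proposal is correct and is essentially the same argument as the paper's proof: take the expectation of the smoothness inequality from Definition~\ref{def:lambdasmooth} with second argument the fixed $s^*$, use linearity of expectation, and then apply the CCE condition $\mathbb{E}_{s\sim\sigma}[c_u(s)] \le \mathbb{E}_{s\sim\sigma}[c_u(s_u^*,s_{-u})]$ to drop the deviation term. The only cosmetic difference is the order in which you invoke the two ingredients.
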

	\begin{proof}
        The proof is analogous to Roughgarden's  proof~\cite{Roughgarden:2015:IRP:2841330.2806883}. From the $(\lambda, \mu)$-smoothness condition in Definition~\ref{def:lambdasmooth}, we have,
		\begin{align*}
			\mathbb{E}_{s \sim \sigma}[h(s)]
			&\le \frac{1}{1-\mu}\mathbb{E}_{s \sim \sigma}\Bigg[\lambda\cdot h(s^*) + \sum_{u \in \mathcal{N}} c_u(s) - \sum_{u \in \mathcal{N}} c_u(s_u^*, s_{-u})\Bigg]
			\\&=  \frac{1}{1 - \mu}\Bigg[\lambda\cdot h(s^*) + \sum_{u \in \mathcal{N}} \mathbb{E}_{s \sim
				\sigma}[c_u(s)] - \sum_{u \in \mathcal{N}} \mathbb{E}_{s \sim \sigma}[c_u(s_u^*, s_{-u})]\Bigg]
			\\\text{From the definition of CCE,}
			\\&\le\frac{\lambda}{1 - \mu}\cdot h(s^*).
		\end{align*}    
		Hence, the theorem.
	\end{proof}
\end{document}